\documentclass[a4paper,12pt]{article}

\usepackage[margin=1in]{geometry}
\usepackage{amsmath,amssymb,amsthm,mathrsfs}
\usepackage{graphicx}
\usepackage{booktabs}
\usepackage{array}
\usepackage{makecell}
\usepackage{threeparttable}
\usepackage{subfig}
\usepackage{microtype}
\usepackage{placeins}
\usepackage[hidelinks]{hyperref}
\hypersetup{%
  pdftitle={A Probability-Based Four-Moment Framework for Basket and Spread Option Pricing under Correlated Lognormal Models: An Empirical Crack-Spread Application},
  pdfauthor={Dongdong Hu, Hasanjan Sayit, Steve Tchoneteck, Frederi Viens},
  pdfkeywords={basket options, spread options, crack spread, correlated lognormal variables, four-moment matching}
}

\graphicspath{{figures/}}
\allowdisplaybreaks[2]

\theoremstyle{plain}
\newtheorem{theorem}{Theorem}[section]
\newtheorem{proposition}[theorem]{Proposition}
\newtheorem{lemma}[theorem]{Lemma}
\newtheorem{corollary}[theorem]{Corollary}

\theoremstyle{definition}
\newtheorem{definition}[theorem]{Definition}

\theoremstyle{remark}
\newtheorem{remark}[theorem]{Remark}

\newcommand{\ud}{\mathrm{d}}
\newcommand{\R}{\mathbb{R}}
\newcommand{\PP}{\mathbb{P}}
\newcommand{\E}{\mathbb{E}}
\newcommand{\Var}{\operatorname{Var}}
\newcommand{\Corr}{\operatorname{Corr}}
\newcommand{\1}{\mathbf{1}}
\newcommand{\rowrule}{\specialrule{0.3pt}{2pt}{2pt}}

\title{Beyond Lognormal Sums: A Four-Moment Probability Framework for Basket and Spread Option Pricing }

\author{%
\begin{tabular}{c}
Dongdong Hu$^{1}$ and Hasanjan Sayit$^{2}$\\
Steve Tchoneteck$^{3}$ and Frederi Viens$^{3}$\\[0.45em]
\small $^{1}$Yiwu Industrial \& Commercial College, \texttt{hudongdong@ywicc.edu.cn}\\
\small $^{2}$Xi'an Jiaotong-Liverpool University, \texttt{Hasanjan.Sayit@liverpool.ac.uk}\\
\small $^{3}$Department of Statistics, Rice University, \texttt{st83@rice.edu}, \texttt{fv15@rice.edu}
\end{tabular}%
}
\date{}

\begin{document}
\maketitle

\begin{abstract}
Basket options are difficult to value under correlated lognormal dynamics
because weighted sums and differences of lognormal variables have no tractable
distribution. This paper develops a probability-based four-moment framework
that separates the exact pricing representation from the distributional
approximation. A change of measure first writes a basket price as a linear
combination of probabilities. For a standard basket with one positive weight,
these probabilities become CDF values of positive correlated lognormal sums.
Each sum is approximated by a shifted lognormal variance mixture matched to its
first four moments. For an unrestricted mixed-sign basket, a signed shifted
lognormal proxy gives an analytical call-price formula. We state admissibility
conditions, provide a practical root-selection rule, establish the main
strike-based financial properties of the direct proxy, and derive exact
pricing-error identities in terms of cumulative distribution function (CDF) discrepancies. The numerical analysis combines standard-basket benchmarks with an empirical application to a normalized $3{:}2{:}1$ crack spread constructed from RBOB
gasoline, ULSD or heating oil, and WTI futures. The results show that the probability
reformulation and the fourth-moment condition improve the distributional fit
and pricing accuracy, particularly when maturity and tail asymmetry increase.
The framework remains analytical, transparent, and suitable for repeated
valuation across strikes and maturities.
\end{abstract}

\noindent\textbf{Keywords:} basket options; standard-basket options; spread
options; crack spread; correlated lognormal variables; four-moment matching;
shifted lognormal approximation; analytical option pricing.

\medskip
\noindent\textbf{JEL classifications:} C02, C58, G13
\newpage
\section{Introduction}
\label{sec:introduction}

Basket options are written on weighted portfolios rather than on single
assets. They are useful when an economic exposure depends jointly on several
prices, as in equity-index baskets, currency portfolios, agricultural
processing margins, and energy spreads. In many such contracts, some assets
enter with positive weights and others with negative weights. This structure
is economically natural, but it complicates valuation. Under the
Black-Scholes model, a European option on one asset has a closed-form price
\cite{blackscholes1973,merton1973}. The same argument does not extend to a
basket because a sum or difference of correlated lognormal variables is not
lognormal. Its distribution is generally unavailable in closed form, and the
exercise region is nonlinear. Monte Carlo simulation provides a reliable
benchmark \cite{glasserman2004monte}, but repeated simulation can be expensive
when prices and sensitivities are needed for many strikes, maturities, or
parameter values. Analytical approximations remain useful when they are fast,
transparent, and sufficiently accurate for repeated valuation.

A broad literature addresses this problem. Margrabe derived an exact formula
for a zero-strike exchange option \cite{margrabe1978value}, while Kirk's
approximation is widely used for two-asset spreads with a nonzero strike
\cite{kirk1995correlation}. Carmona and Durrleman developed approximations based
on the geometry of the exercise boundary
\cite{carmona2003pricing,carmona2005generalizing}. Related closed-form and
boundary-expansion methods appear in \cite{li2008closed,Deng2008closed}, Fourier
methods in \cite{dempster2002spread}, and lower-dimensional approximations in
\cite{Caldana_Fusai20134893,bjerksund2014closed}. Distributional methods that
allow negative basket weights are studied in \cite{borovkova2007}. A second
line of work uses conditioning. Curran conditions on a geometric mean
\cite{curran1994valuing}, with related constructions in
\cite{rogers1995value,vorst1992prices,deelstra2003pricing}. Taylor expansions
also lead to useful formulas for arithmetic-average and basket options
\cite{ju2002}. Although these methods differ in implementation, they address
the same basic difficulty: the payoff depends on a dependent sum of lognormal
variables.

Moment matching provides a direct distributional alternative. The classical
Fenton--Wilkinson approximation matches the first two moments
\cite{fenton1960sum}, and Schwartz and Yeh develop a recursive approximation
\cite{schwartzyeh1982}. Broader discussions appear in \cite{Dufresne}, while
The tails of sums and differences of lognormal variables are studied in
\cite{archil-peter}. Financial applications include arithmetic-average
approximations \cite{turnbullwakeman1991,levy1992}, reciprocal-gamma methods
\cite{milevskyposner1998}, and higher-moment constructions
\cite{deelstra2010moment,leccadito2016,wu2019,hu2023pricingbasketoptionsmoments}.
These studies suggest that skewness and kurtosis can materially affect option
values, especially when maturity, volatility, or heterogeneous weights make
The basket distribution is more asymmetric.

The present paper follows a probability-first strategy. We do not approximate
the payoff at the outset. Instead, a change of measure first rewrites the exact
basket price as a linear combination of probabilities. For a standard basket,
defined here as a basket with one positive weight and all remaining weights
negative, these probabilities become CDF values of positive correlated
lognormal sums. This transformation separates the exact pricing identity from
the approximation used to evaluate it. Each positive sum is then represented
by a shifted lognormal variance mixture that matches the first four moments and
retains an explicit CDF.

A general mixed-sign basket requires a different proxy because its value may
fall on either side of zero. We therefore use a signed shifted lognormal
variable. Its sign probability, scale, location, and shift are selected to
match the first four moments of the basket. The resulting call price remains
analytical. Because the nonlinear moment equations need not produce a valid
distribution, we state explicit admissibility conditions and a numerical
root-selection rule. We also show that an admissible direct proxy generates
nonnegative, decreasing, and convex call prices and satisfies put--call parity.

The pricing-error analysis clarifies what moment matching can and cannot
provide. For a general basket, the option-price error is the discounted
integral of the CDF difference above the strike. For a standard basket, the
error is a weighted sum of CDF errors evaluated at a common threshold. These
identities show that four matched moments do not, by themselves, guarantee an
accurate option value. The approximation must also reproduce the part of the
distribution that contributes to exercise.

The numerical study preserves the original six standard-basket examples and
then introduces an empirical energy application. The normalized $3{:}2{:}1$
crack spread combines two units of RBOB gasoline, one unit of ULSD, and three
short units of WTI crude oil on a per-barrel basis \cite{cme2024crack}. Its
normalized weight vector has two positive entries and one negative entry, so it
provides a natural application of the signed general-basket method. Daily Yahoo
Finance futures data are used to estimate volatility and correlation inputs
\cite{yahoo2026energy}. Monte Carlo simulation under the same fitted lognormal
model is used as the benchmark. The empirical exercise therefore measures the
approximation error of the analytical formulas; it is not a calibration to
observed option prices.

The paper makes four main contributions. First, it derives an exact
probability representation for a general-basket option and converts the
standard-basket probabilities into CDFs of positive correlated lognormal sums.
Second, it develops a four-moment shifted-mixture approximation with an
explicit CDF and extends the same moment-matching idea to unrestricted basket
weights through a signed proxy. Third, it gives admissibility, root-selection,
and financial consistency results for the resulting analytical prices.
Fourth, it links option-price errors to CDF and tail discrepancies and verifies
this connection in both standard-basket diagnostics and an empirical
mixed-sign basket.

We begin under the multivariate Black--Scholes model,
\[
\ud S_i(t)=rS_i(t)\,\ud t+\sigma_iS_i(t)\,\ud W_i(t),
\qquad
\ud\langle W_i,W_j\rangle_t=\rho_{ij}\,\ud t.
\]
The time-zero value of a European basket call is
\[
C(T)=e^{-rT}\E\!\left[
\left(\sum_{i=1}^{n}\omega_iS_i(T)-K\right)^+
\right],
\]
where the weights may have either sign. Section~\ref{sec:standard-basket-framework}
develops the standard-basket probability representation.
Section~\ref{sec:general-basket-four-moment} presents the signed general-basket
proxy. Section~\ref{sec:admissibility-properties} studies admissibility,
financial properties, and pricing-error identities. Section~\ref{sec:numerical-evaluation}
reports the standard-basket benchmark and the empirical crack-spread
application.

\section{Probability Representations for Standard Basket Options}
\label{sec:standard-basket-framework}

This section develops the standard-basket pricing method in three steps. We
first use a change of measure to write a general basket price as a linear
combination of Gaussian probabilities. We then specialize the representation
to a standard basket and express each probability as the CDF of a positive sum
of correlated lognormal variables. These two steps are exact. The approximation
enters only in the final step, where each lognormal sum is replaced by a
four-moment shifted lognormal variance mixture. This ordering makes clear which
parts of the method are analytical identities and which parts are
approximations.

\subsection{A change-of-measure representation for general baskets}
\label{subsec:change-of-measure}

We begin with two Gaussian shift identities. They provide the
change-of-measure argument used in the basket-price representation.

\begin{lemma}[Independent Gaussian shift]
\label{lem:independent-gaussian-shift}
Let $\epsilon_1,\ldots,\epsilon_n$ be independent standard normal random
variables. For constants $a_1,\ldots,a_n$ and any measurable function $f$ for
which the expectations are finite,
\begin{equation}
\label{eq:independent-gaussian-shift}
\E\!\left[
\exp\!\left(\sum_{i=1}^{n}\left(a_i\epsilon_i-\frac12a_i^2\right)\right)
f(\epsilon_1,\ldots,\epsilon_n)
\right]
=
\E\!\left[f(\epsilon_1+a_1,\ldots,\epsilon_n+a_n)\right].
\end{equation}
\end{lemma}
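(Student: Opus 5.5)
The identity follows directly from the explicit joint Gaussian density and a translation of variables in $\R^n$; no earlier result of the paper is needed. Write $\varphi_n(x)=(2\pi)^{-n/2}\exp(-\tfrac12|x|^2)$ for the joint density of $\epsilon=(\epsilon_1,\ldots,\epsilon_n)$, which is the product of the marginal standard normal densities by independence, and put $a=(a_1,\ldots,a_n)$. The left side of \eqref{eq:independent-gaussian-shift} is then
\[
\int_{\R^n}\exp\!\Big(a\cdot x-\tfrac12|a|^2\Big)f(x)\,\varphi_n(x)\,\ud x .
\]

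The key step is completing the square in the exponent:
\[
a\cdot x-\tfrac12|a|^2-\tfrac12|x|^2=-\tfrac12|x-a|^2 .
\]
This gives
\[
\exp\!\Big(a\cdot x-\tfrac12|a|^2\Big)\varphi_n(x)=\varphi_n(x-a).
\]
The left side is therefore $\int_{\R^n} f(x)\varphi_n(x-a)\,\ud x$. The translation $x=y+a$ has unit Jacobian and maps $\R^n$ onto itself, so this integral equals $\int_{\R^n} f(y+a)\varphi_n(y)\,\ud y=\E[f(\epsilon+a)]$, which is the right side.

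The only point needing care is integrability, since $f$ is merely measurable. First take $f\ge 0$. Then both sides are integrals of nonnegative measurable functions, and Tonelli's theorem justifies the change of variables with values in $[0,\infty]$. For general $f$, apply this to $f^+$ and $f^-$ separately. The finiteness hypothesis makes each piece finite, because the identity already shown for $|f|$ gives
\[
\E\!\left[e^{a\cdot\epsilon-\frac12|a|^2}|f(\epsilon)|\right]=\E\big[|f(\epsilon+a)|\big],
\]
so one side is finite exactly when the other is. Subtracting the identities for $f^+$ and $f^-$ gives the result.

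An equivalent way to see the result is Girsanov: the density $e^{a\cdot\epsilon-\frac12|a|^2}$ defines a measure under which $\epsilon-a$ is standard normal. The direct computation above is more elementary and is the one I would present.
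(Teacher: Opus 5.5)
Your proof is correct and follows the paper's argument exactly: complete the square to obtain $\exp(a\cdot x-\tfrac12|a|^2)\varphi_n(x)=\varphi_n(x-a)$, then change variables by the translation $x=y+a$. Splitting $f=f^+-f^-$ to handle integrability is a careful addition that the paper leaves implicit. Strictly, the translation step for nonnegative $f$ rests on translation invariance of Lebesgue measure rather than on Tonelli, but this does not affect the argument.
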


\begin{proof}
Let $\varphi_n$ denote the density of an $n$-dimensional standard normal
vector. Completing the square gives
\[
\exp\!\left(\sum_{i=1}^{n}\left(a_ix_i-\frac12a_i^2\right)\right)
\varphi_n(x)
=
\varphi_n(x-a),
\]
where $a=(a_1,\ldots,a_n)^\top$. Substituting this identity into the integral
on the left-hand side of \eqref{eq:independent-gaussian-shift} and changing
variables from $x$ to $x+a$ proves the result.
\end{proof}

\begin{lemma}[Correlated Gaussian shift]
\label{lem:correlated-gaussian-shift}
Let $\epsilon=(\epsilon_1,\ldots,\epsilon_n)^\top$ be a centered Gaussian
vector with correlation matrix $\Gamma=(\Gamma_{ij})$. For any
$i\in\{1,\ldots,n\}$ and any measurable function $f$ for which the
expectations are finite,
\begin{equation}
\label{eq:correlated-gaussian-shift}
\E\!\left[e^{a_i\epsilon_i-\frac12a_i^2}
f(\epsilon_1,\ldots,\epsilon_n)\right]
=
\E\!\left[
f\!\left(\epsilon_1+a_i\Gamma_{1i},\ldots,
\epsilon_n+a_i\Gamma_{ni}\right)
\right].
\end{equation}
\end{lemma}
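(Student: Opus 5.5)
The plan is to reduce the correlated statement to Lemma~\ref{lem:independent-gaussian-shift} through a triangular (Cholesky-type) representation of $\epsilon$ in which $\epsilon_i$ is a single independent coordinate. Fix $i$. Choose independent standard normals $Z_1,\ldots,Z_n$ and write
\[
\epsilon_i = Z_1,\qquad \epsilon_j = \Gamma_{ji} Z_1 + \sum_{k=2}^{n} L_{jk} Z_k \quad (j\neq i).
\]
Here the $L_{jk}$ come from a factorization of the conditional covariance of $(\epsilon_j)_{j\ne i}$ given $\epsilon_i$. This is possible because $\Gamma_{ii}=1$, so the regression coefficient of $\epsilon_j$ on $\epsilon_i$ is exactly $\Gamma_{ji}$, and the residuals are Gaussian and independent of $\epsilon_i$. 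The residual covariance $\Gamma_{-i,-i}-\Gamma_{-i,i}\Gamma_{i,-i}$ is positive semidefinite, so it admits a (possibly singular) square root. For that reason I would use a general square root rather than insist on a strict Cholesky factor, which avoids assuming that $\Gamma$ is nonsingular. Writing $\epsilon = A Z$, where the first column of $A$ is $(\Gamma_{1i},\ldots,\Gamma_{ni})^\top$, the vector $AZ$ has the same law as $\epsilon$. Both sides of \eqref{eq:correlated-gaussian-shift} depend only on the law of $\epsilon$, so we may work with this representation.

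Next, set $g(Z)=f(AZ)$ and apply Lemma~\ref{lem:independent-gaussian-shift} with shift vector $(a_i,0,\ldots,0)$. The left side becomes
\[
\E\!\left[e^{a_iZ_1-\frac12 a_i^2} g(Z)\right]=\E\!\left[g(Z_1+a_i,Z_2,\ldots,Z_n)\right],
\]
and $A(Z+a_ie_1)=AZ+a_iAe_1=\epsilon+a_i(\Gamma_{1i},\ldots,\Gamma_{ni})^\top$. This is exactly the right-hand side of \eqref{eq:correlated-gaussian-shift}. The finiteness assumption transfers directly, since the expectations are the same integrals in the new coordinates.

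The only real obstacle is justifying the construction of $A$ with first column equal to $\Gamma_{\cdot i}$ when $\Gamma$ may be singular. I would handle it by the regression decomposition above, using the PSD square root of the residual covariance. There is also a measure-free alternative: view the identity as Girsanov's formula (Cameron--Martin) for a Gaussian vector, checking equality of the laws under the tilted measure via the moment generating function. For any $u$,
\[
\E\!\left[e^{a_i\epsilon_i-\frac12a_i^2+u^\top\epsilon}\right]=e^{\frac12u^\top\Gamma u+a_i u^\top\Gamma_{\cdot i}}.
\]
This is the MGF of $\epsilon+a_i\Gamma_{\cdot i}$, so the two distributions coincide. I would mention this as a cross-check.
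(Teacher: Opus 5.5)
Your proof is correct and is essentially the paper's argument: write $\epsilon=AZ$ with $AA^\top=\Gamma$ and apply Lemma~\ref{lem:independent-gaussian-shift}. Your triangular factor, whose $i$th row is $e_1^\top$, is just a special case of the paper's proof. The paper takes an arbitrary square root $A$ and shifts by $a_iA_i^\top$, using $\lVert A_i\rVert^2=\Gamma_{ii}=1$ for the normalization, so the regression construction you identify as the main obstacle is not actually needed.
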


\begin{proof}
Write $\epsilon=A\widetilde\epsilon$, where $AA^\top=\Gamma$ and
$\widetilde\epsilon$ is a vector of independent standard normal variables.
If $A_i$ denotes the $i$th row of $A$, then
$\epsilon_i=A_i\widetilde\epsilon$ and $\lVert A_i\rVert^2=1$. Apply
Lemma~\ref{lem:independent-gaussian-shift} to $\widetilde\epsilon$ with shift
vector $a_iA_i^\top$. The $j$th transformed component is
\[
A_j\left(\widetilde\epsilon+a_iA_i^\top\right)
=
\epsilon_j+a_iA_jA_i^\top
=
\epsilon_j+a_i\Gamma_{ji},
\]
which yields \eqref{eq:correlated-gaussian-shift}.
\end{proof}

The preceding identities give the following exact basket-price
representation.

\begin{proposition}[Probability representation for a general basket]
\label{prop1}
The basket call price satisfies
\begin{equation}
\label{eq1}
\begin{aligned}
C(T)
={}&\sum_{i=1}^{n}\omega_iS_i(0)
\PP\!\left(
\sum_{j=1}^{n}\omega_jS_j(0)
 e^{-\frac12\sigma_j^2T+\rho_{ij}\sigma_i\sigma_jT
 +\sigma_j\sqrt{T}\epsilon_j}
-Ke^{-rT}\geq0
\right)\\
&-Ke^{-rT}
\PP\!\left(
\sum_{j=1}^{n}\omega_jS_j(0)
 e^{-\frac12\sigma_j^2T+\sigma_j\sqrt{T}\epsilon_j}
-Ke^{-rT}\geq0
\right).
\end{aligned}
\end{equation}
\end{proposition}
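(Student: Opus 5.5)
We start from the terminal representation $S_j(T)=S_j(0)\exp\bigl((r-\tfrac12\sigma_j^2)T+\sigma_j\sqrt{T}\epsilon_j\bigr)$. Here $\epsilon=(\epsilon_1,\ldots,\epsilon_n)^\top$ is a centered Gaussian vector with unit variances and correlation matrix $\Gamma_{ij}=\rho_{ij}$. Let
\[
B=\sum_{j=1}^{n}\omega_jS_j(0)e^{-\frac12\sigma_j^2T+\sigma_j\sqrt{T}\epsilon_j}-Ke^{-rT},
\]
so that $e^{-rT}\bigl(\sum_j\omega_jS_j(T)-K\bigr)=B$. Then $C(T)=\E[B^+]=\E[B\,\1_{\{B\ge0\}}]$. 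The strict versus non-strict inequality is immaterial, since $B\,\1_{\{B=0\}}=0$.

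Expanding $B$ by linearity splits the price into $n+1$ terms:
\[
C(T)=\sum_{i=1}^{n}\omega_iS_i(0)\,\E\!\left[e^{\sigma_i\sqrt{T}\epsilon_i-\frac12\sigma_i^2T}\,\1_{\{B\ge0\}}\right]-Ke^{-rT}\,\PP(B\ge0).
\]
The last term is already the second probability in \eqref{eq1}. Each expectation is finite because the integrand is bounded by a lognormal variable.

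For the $i$th term we apply Lemma~\ref{lem:correlated-gaussian-shift} with $a_i=\sigma_i\sqrt{T}$ and $f(\epsilon)=\1_{\{B(\epsilon)\ge0\}}$, which is measurable and bounded. The lemma replaces each $\epsilon_j$ inside $B$ by $\epsilon_j+\sigma_i\sqrt{T}\rho_{ji}$. This turns the exponent $\sigma_j\sqrt{T}\epsilon_j$ into $\sigma_j\sqrt{T}\epsilon_j+\rho_{ij}\sigma_i\sigma_jT$, using $\rho_{ji}=\rho_{ij}$ and $\Gamma_{ii}=1$. Hence the $i$th expectation equals
\[
\PP\!\left(\sum_{j}\omega_jS_j(0)e^{-\frac12\sigma_j^2T+\rho_{ij}\sigma_i\sigma_jT+\sigma_j\sqrt{T}\epsilon_j}-Ke^{-rT}\ge0\right),
\]
which is exactly the first sum in \eqref{eq1}.

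The only delicate point is bookkeeping. We must check that the shift produced by the lemma is $a_i\Gamma_{ji}$ for every component $j$, including the diagonal case $j=i$, where it gives $\sigma_i^2T$. We must also confirm that the Radon--Nikodym factor $e^{a_i\epsilon_i-\frac12a_i^2}$ matches the $e^{-\frac12\sigma_i^2T+\sigma_i\sqrt{T}\epsilon_i}$ pulled out of $S_i(T)$, so that no stray constants remain. Signs of $\omega_i$ play no role, since linearity is applied before the change of measure.
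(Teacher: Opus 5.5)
Your proposal is correct and follows the paper's own argument. You expand the discounted payoff on the exercise event, factor the lognormal martingale $e^{\sigma_i\sqrt{T}\epsilon_i-\frac12\sigma_i^2T}$ out of each asset term, apply Lemma~\ref{lem:correlated-gaussian-shift} with $a_i=\sigma_i\sqrt{T}$, and leave the strike term as a plain probability. Your explicit checks fill in details the paper leaves implicit: the shift $\sigma_i\sqrt{T}\rho_{ji}$ in every coordinate, the matching of the Radon--Nikodym factor, the boundedness of the indicator, and the irrelevance of $\{B=0\}$.
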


\begin{proof}
Let $A_T=\{\sum_{j=1}^{n}\omega_jS_j(T)\geq K\}$. Expanding the payoff gives
\[
C(T)
=
\sum_{i=1}^{n}\omega_i e^{-rT}\E[S_i(T)\1_{A_T}]
-Ke^{-rT}\PP(A_T).
\]
For each asset term, factor the lognormal martingale
$e^{\sigma_i\sqrt{T}\epsilon_i-\frac12\sigma_i^2T}$ and apply
Lemma~\ref{lem:correlated-gaussian-shift} with $a_i=\sigma_i\sqrt{T}$.
The probability term requires no change of measure. This gives
\eqref{eq1}.
\end{proof}

\begin{remark}[Gaussian-region interpretation]
\label{rem:gaussian-region-interpretation}
For $i=1,\ldots,n$, define
\[
D_i=
\left\{x\in\R^n:
\sum_{j=1}^{n}\omega_jS_j(0)
 e^{-\frac12\sigma_j^2T+\rho_{ij}\sigma_i\sigma_jT
 +\sigma_j\sqrt{T}x_j}
-Ke^{-rT}\geq0
\right\},
\]
and define
\[
D=
\left\{x\in\R^n:
\sum_{j=1}^{n}\omega_jS_j(0)
 e^{-\frac12\sigma_j^2T+\sigma_j\sqrt{T}x_j}
-Ke^{-rT}\geq0
\right\}.
\]
If $\phi_\Gamma$ is the density of an $n$-dimensional centered normal vector
with correlation matrix $\Gamma$, then Proposition~\ref{prop1} can be written
as
\[
C(T)=
\sum_{i=1}^{n}\omega_iS_i(0)\int_{D_i}\phi_\Gamma(x)\,\ud x
-Ke^{-rT}\int_D\phi_\Gamma(x)\,\ud x.
\]
Using complementary regions gives the put--call parity relation
\[
C(T)-P(T)=\sum_{i=1}^{n}\omega_iS_i(0)-Ke^{-rT},
\]
where $P(T)$ is the corresponding basket put price. The representation is
exact, but the regions $D_i$ and $D$ have nonlinear boundaries. Their Gaussian
probabilities therefore still require numerical integration or an additional
approximation.
\end{remark}

\subsection{Sums of correlated lognormal variables}
\label{subsec:standard-lognormal-sums}

We now specialize the general representation to a standard basket. Without
loss of generality, assume that $\omega_1>0$ and $\omega_i<0$ for
$2\leq i\leq n$, and define
\[
s_i=|\omega_i|S_i(0),\qquad i=1,\ldots,n.
\]
Set $q_1=1$, $q_i=-1$ for $2\leq i\leq n+1$, and
$s_{n+1}=Ke^{-rT}$. For notational convenience, also set
$\rho_{n+1,1}=0$ and $\sigma_{n+1}=1$. Proposition~\ref{prop1} then gives the
compact exact representation
\begin{equation}
\label{compact}
C^{sd}(T)=
\sum_{i=1}^{n+1}q_is_i
\PP\!\left(
\sum_{j=1}^{n}q_js_j
 e^{-\frac12\sigma_j^2T+\rho_{ij}\sigma_i\sigma_jT
 +\sigma_j\sqrt{T}\epsilon_j}
-Ke^{-rT}\geq0
\right).
\end{equation}
The probabilities in \eqref{compact} remain difficult to evaluate because their
exercise regions have nonlinear boundaries. The next proposition rewrites them
as CDF values of positive correlated lognormal sums.

\begin{proposition}\label{cor2.8}
The price of a standard-basket option admits the representation
\begin{equation} \label{SB2}
\begin{split}
C^{sd}(T)=&s_1P\left(\sum_{j=1}^ne^{N_j(m_{1j}, \bar{\sigma}_j^2)}\le 1\right)\\
-&\sum_{i=2}^ns_iP\left(\sum_{j=1}^ne^{N_j(m_{ij}, \bar{\sigma}_j^2)}\le 1\right)\\
-&Ke^{-rT}P\left(\sum_{j=1}^ne^{N_j(m_{n+1,j}, \bar{\sigma}_j^2)}\le 1\right),
\end{split}
\end{equation}
where
\begin{equation}
\begin{split}
\bar{\sigma}_1=&\sigma_1\sqrt{T}\;\;\;\;  \bar{\sigma}_j=\sqrt{\sigma_j^2+\sigma_1^2-2\sigma_1\sigma_j\rho_{1j}}\sqrt{T}, \; \;\; \;  2\le j\le n,\\
m_{i1}=&\ln (Ke^{-rT})-\ln s_1+\frac{1}{2}\sigma_1^2T-\rho_{i1}\sigma_i\sigma_1T, \;\;\;\; 1\le i\le n,\\
m_{ij}=&\ln s_j-\ln s_1+\frac{1}{2}(\sigma_1^2-\sigma_j^2)T+\rho_{ij}\sigma_i\sigma_jT-\rho_{i1}\sigma_i\sigma_1T, \;\;\;\; 1\le i\le n, \; 2\le j\le n,\\
m_{n+1,1}=&\ln(Ke^{-rT})-\ln s_1+\frac{1}{2}\sigma_1^2T,\\
m_{n+1,j}=&\ln s_j-\ln s_1+\frac{1}{2}(\sigma_1^2-\sigma_j^2)T,\;\;\;\;2\le j\le n,
\end{split}
\end{equation}
and the correlation matrix $\bar{\Gamma}=(\theta_{jk})$ of the normal random variables $(N_1, N_2, \ldots, N_n)$ is given by
\begin{equation}\label{kl}
\begin{split}
\theta_{11}=&1, \; \; \; \theta_{1j}=\frac{\sigma_1-\sigma_j\rho_{1j}}{\sqrt{\sigma_j^2+\sigma_1^2-2\sigma_1\sigma_j\rho_{1j}}},\; \;\; \;  2\le j\le n\\
\theta_{j, k}=&\frac{\sigma_j\sigma_k\rho_{jk}-\sigma_1\sigma_j\rho_{1j}-\sigma_1\sigma_k\rho_{1k}+\sigma_1^2}{\sqrt{\sigma_j^2+\sigma_1^2-2\sigma_1\sigma_j\rho_{1j}}\sqrt{\sigma_k^2+\sigma_1^2-2\sigma_1\sigma_k\rho_{1k}}}, \;\;\;\;\; 2\le j\le n, \; \; 2\le k\le n\\
\end{split}
\end{equation}
\end{proposition}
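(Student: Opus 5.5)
The plan is to start from the compact exact representation \eqref{compact}, i.e.\ from Proposition~\ref{prop1}, and rewrite each probability separately. I take $\epsilon=(\epsilon_1,\ldots,\epsilon_n)$ to be centered Gaussian with correlation matrix $\Gamma=(\rho_{jk})$. For a fixed index $i$ (with the convention $\rho_{n+1,j}=0$ for the $K$-term), the exercise event reads
\[
s_1e^{-\frac12\sigma_1^2T+\rho_{i1}\sigma_i\sigma_1T+\sigma_1\sqrt{T}\epsilon_1}
\ \ge\ Ke^{-rT}+\sum_{j=2}^{n}s_je^{-\frac12\sigma_j^2T+\rho_{ij}\sigma_i\sigma_jT+\sigma_j\sqrt{T}\epsilon_j}.
\]
This holds because $q_1=1$ and $q_j=-1$ for $j\ge2$, so the single positive-weight term sits on one side and everything else is positive on the other side.

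I then divide both sides by the strictly positive left-hand side. The event becomes $\sum_{j=1}^{n}e^{Y_j}\le1$, where
\[
Y_1=\ln(Ke^{-rT})-\ln s_1+\tfrac12\sigma_1^2T-\rho_{i1}\sigma_i\sigma_1T-\sigma_1\sqrt{T}\epsilon_1,
\]
and, for $j\ge2$,
\[
Y_j=\ln s_j-\ln s_1+\tfrac12(\sigma_1^2-\sigma_j^2)T+\rho_{ij}\sigma_i\sigma_jT-\rho_{i1}\sigma_i\sigma_1T+\sqrt{T}(\sigma_j\epsilon_j-\sigma_1\epsilon_1).
\]
The deterministic parts are exactly $m_{i1}$ and $m_{ij}$. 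Setting $\rho_{n+1,\cdot}=0$ gives $m_{n+1,j}$.

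Next I identify the law of $(Y_1,\ldots,Y_n)$. It is jointly Gaussian, being an affine image of $\epsilon$. Its random parts are $-\sigma_1\sqrt{T}\epsilon_1$ and $\sqrt{T}(\sigma_j\epsilon_j-\sigma_1\epsilon_1)$.
\begin{itemize}
\item \emph{Variances.} These are $\sigma_1^2T$ and $(\sigma_j^2+\sigma_1^2-2\sigma_1\sigma_j\rho_{1j})T$, which give $\bar\sigma_1$ and $\bar\sigma_j$.
\item \emph{Covariance between $Y_1$ and $Y_j$.} It equals $T(\sigma_1^2-\sigma_1\sigma_j\rho_{1j})$. Dividing by $\bar\sigma_1\bar\sigma_j$ gives $\theta_{1j}$.
\item \emph{Covariance between $Y_j$ and $Y_k$.} It equals $T(\sigma_j\sigma_k\rho_{jk}-\sigma_1\sigma_j\rho_{1j}-\sigma_1\sigma_k\rho_{1k}+\sigma_1^2)$, which gives $\theta_{jk}$.
\end{itemize}
Since these second moments do not depend on $i$, a single correlation matrix $\bar\Gamma$ serves all $n+1$ probabilities, and only the means $m_{ij}$ change. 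Substituting back into \eqref{compact} with the signs $q_i$ yields \eqref{SB2}.

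The computation is routine. The main point needing care is bookkeeping rather than ideas, in three places:
\begin{itemize}
\item \emph{Sign of $Y_1$.} The random part of $Y_1$ is $-\sigma_1\sqrt T\epsilon_1$. This is what makes $\theta_{1j}$ have numerator $\sigma_1-\sigma_j\rho_{1j}$ rather than its negative.
\item \emph{Boundary case.} The non-strict inequality does not matter, since the boundary has probability zero.
\item \emph{The $(n+1)$-st term.} The convention $\sigma_{n+1}=1$, $\rho_{n+1,\cdot}=0$ must indeed remove all drift corrections, so that it reproduces $m_{n+1,j}$.
\end{itemize}
I would verify these three points explicitly.
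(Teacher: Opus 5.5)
Your proposal is correct and follows the paper's proof essentially step for step: divide each event in \eqref{compact} by the positive first-asset term, read off $m_{ij}$, and compute the covariances of $-\sigma_1\sqrt{T}\epsilon_1$ and $\sqrt{T}(\sigma_j\epsilon_j-\sigma_1\epsilon_1)$ to obtain $\bar\sigma_j$ and $\bar\Gamma$. Your point that the $(n+1)$st term needs $\rho_{n+1,j}=0$ for every $j$ is correct and is stronger than the paper's convention, which states only $\rho_{n+1,1}=0$; the boundary remark is unnecessary, because dividing by a positive quantity maps the event $\{\cdot\geq0\}$ exactly onto $\{\cdot\leq1\}$.
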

\begin{proof}
Using \eqref{compact}, divide each event by its positive first-asset term. For
$i=1,\ldots,n+1$, this gives
\begin{equation}
\label{eq:standard-basket-transformed-event}
C^{sd}(T)=
\sum_{i=1}^{n+1}q_is_i
\PP\!\left(
 e^{m_{i1}-\sigma_1\sqrt{T}\epsilon_1}
 +\sum_{j=2}^{n}
 e^{m_{ij}+(\sigma_j\epsilon_j-\sigma_1\epsilon_1)\sqrt{T}}
 \leq1
\right),
\end{equation}
where
\[
m_{i1}=\ln(Ke^{-rT})-\ln s_1+\frac12\sigma_1^2T
-\rho_{i1}\sigma_i\sigma_1T
\]
and, for $j\geq2$,
\[
m_{ij}=\ln s_j-\ln s_1+\frac12(\sigma_1^2-\sigma_j^2)T
+\rho_{ij}\sigma_i\sigma_jT
-\rho_{i1}\sigma_i\sigma_1T.
\]
For $j\geq2$, define
$w_j=\sigma_j\epsilon_j-\sigma_1\epsilon_1$ and set
$N_1=-\epsilon_1$. Then
\[
w_j=\sqrt{\sigma_j^2+\sigma_1^2
-2\sigma_1\sigma_j\rho_{1j}}\,N_j,
\]
where each $N_j$ is standard normal. Direct covariance calculations give
\[
\Corr(N_1,N_j)=
\frac{\sigma_1-\sigma_j\rho_{1j}}
{\sqrt{\sigma_j^2+\sigma_1^2-2\sigma_1\sigma_j\rho_{1j}}}
\]
and, for $j,k\geq2$,
\[
\Corr(N_j,N_k)=
\frac{\sigma_j\sigma_k\rho_{jk}
-\sigma_1\sigma_j\rho_{1j}
-\sigma_1\sigma_k\rho_{1k}+\sigma_1^2}
{\sqrt{\sigma_j^2+\sigma_1^2-2\sigma_1\sigma_j\rho_{1j}}
 \sqrt{\sigma_k^2+\sigma_1^2-2\sigma_1\sigma_k\rho_{1k}}}.
\]
Substituting these standardized Gaussian variables into
\eqref{eq:standard-basket-transformed-event} yields \eqref{SB2} and the stated
parameters.
\end{proof}

Proposition~\ref{cor2.8} transfers the main difficulty from a nonlinear exercise region to the distribution of a positive sum of correlated lognormal variables. This change is useful because the same distributional approximation can be applied to each probability in \eqref{SB2}.

\begin{remark}
The representation in \eqref{SB2} is exact. Its practical value depends on the quality of the approximation used for the lognormal sums. Three-moment shifted lognormal methods are accurate in many regular-basket settings \cite{hu2023pricingbasketoptionsmoments,wu2019}. The construction below adds a fourth-moment condition so that the proxy can also adjust to the kurtosis of the target distribution.
\end{remark}

\subsection{A four-moment approximation}
\label{subsec:four-moment-lognormal}

We now construct an analytical approximation for the sums that appear in Proposition~\ref{cor2.8}. A finite sum of lognormal variables is generally not lognormal, even when the underlying Gaussian variables are jointly normal. The classical Fenton--Wilkinson approximation replaces the sum with a single lognormal variable matched to the first two moments. This method is simple, but two moments cannot fully represent the skewness and tail shape of a heterogeneous or strongly dependent sum. Shifted-lognormal and recursive approximations provide additional flexibility; see, for example, \cite{schwartzyeh1982,Dufresne,hu2023pricingbasketoptionsmoments}. Our objective is to retain an explicit cumulative distribution function while matching the first four moments of the target sum.

We use the shifted mixture
\[
W=e^{s\sqrt{Y}N+m}+\tau,
\]
where $N$ is standard normal, $Y$ is independent of $N$, and
\[
P(Y=\alpha)=p,\qquad P(Y=\beta)=1-p.
\]
The constants $\alpha$ and $\beta$ are fixed positive values in the baseline specification, while $p$, $s$, $m$, and $\tau$ are determined by moment matching. Conditional on $Y$, the random variable $W-\tau$ is lognormal. The binary variance mixture gives the approximation enough flexibility to match skewness and kurtosis without losing an explicit distribution function. Section~\ref{subsec:tail-sensitive-mixture} also discusses the symmetric parameterization $\alpha=1-\delta$ and $\beta=1+\delta$, which provides a one-dimensional way to examine the sensitivity of the approximation to tail shape.

Let
\[
S = \sum_{j=1}^d e^{N(m_j,\nu_j^2)}
\]
be the sum of lognormal terms, and define
\[
\bar{S} = \sum_{j=1}^d p_j e^{\nu_j N_j},
\]
where
\[
p_j := \frac{e^{m_j}}{\sum_{i=1}^d e^{m_i}}, \quad j = 1,2,\ldots,d.
\]
Then we can rewrite
\[
S = \Big(\sum_{j=1}^d e^{m_j}\Big)\bar{S}.
\]
Our objective is to approximate $\bar{S}$ by $W$ via matching the first four moments, that is, we impose
\[
E[\bar{S}^i] = E[W^i], \quad i = 1,2,3,4.
\]
This system determines the parameters $\tau$, $s$, $m$, and $p$. Consequently, we approximate
\begin{equation}\label{log-20}
P(S \le x) = P\!\Big(\bar{S} \le \frac{x}{\sum_{i=1}^d e^{m_i}}\Big)
\approx P\!\Big(W \le \frac{x}{\sum_{i=1}^d e^{m_i}}\Big).
\end{equation}

The first four moments of $\bar{S}$ are given by
\begin{equation}
\begin{split}
 E[\bar{S}]&=\sum_{j=1}^dp_je^{\frac{1}{2}\nu_j^2},\\
 E[\bar{S}^2]&=\sum_{i=1}^d\sum_{j=1}^dp_ip_je^{\frac{1}{2}\nu_i^2+\frac{1}{2}\nu_j^2+\rho_{ij}\nu_i\nu_j},\\
 E[\bar{S}^3]&=\sum_{i=1}^d\sum_{j=1}^d\sum_{k=1}^dp_ip_jp_ke^{\frac{1}{2}\nu_i^2+\frac{1}{2}\nu_j^2+\frac{1}{2}\nu_k^2+\rho_{ij}\nu_i\nu_j+\rho_{ik}\nu_i\nu_k+\rho_{jk}\nu_j\nu_k},\\
 E[\bar{S}^4]&=\sum_{i=1}^d\sum_{j=1}^d\sum_{k=1}^d\sum_{l=1}^dp_ip_jp_kp_le^{\frac{1}{2}\nu_i^2+\frac{1}{2}\nu_j^2+\frac{1}{2}\nu_k^2+\frac{1}{2}\nu_l^2+\rho_{ij}\nu_i\nu_j+\rho_{ik}\nu_i\nu_k+\rho_{il}\nu_i\nu_l+\rho_{jk}\nu_j\nu_k+\rho_{jl}\nu_j\nu_l+\rho_{kl}\nu_k\nu_l}.
\end{split}
\end{equation}
The first four moments of $W$ are
\begin{equation}\label{W_moments}
\begin{split}
 E[W]&=e^m\phi_Y(\frac{1}{2}s^2)+\tau,\\
 E[W^2]&=e^{2m}\phi_Y(2s^2)+2\tau e^m\phi_Y(\frac{1}{2}s^2)+\tau^2,\\
 E[W^3]&=e^{3m}\phi_Y(\frac{9}{2}s^2)+3\tau e^{2m}\phi_Y(2s^2)+3\tau^2 e^m\phi_Y(\frac{1}{2}s^2)+\tau^3,\\
 E[W^4]&=e^{4m}\phi_Y(8s^2)+4\tau e^{3m}\phi_Y(\frac{9}{2}s^2)+6\tau^2 e^{2m}\phi_Y(2s^2)+4\tau^3 e^m\phi_Y(\frac{1}{2}s^2)+\tau^4,
\end{split}
\end{equation}
where
\begin{equation*}
\begin{split}
 \phi_Y(\theta)=E[e^{\theta Y}]=pe^{\alpha\theta}+(1-p)e^{\beta\theta}.
\end{split}
\end{equation*}

The mixture structure gives an explicit cumulative distribution function.

\begin{lemma}\label{lemma31} For $x > \tau$, the distribution function of $W$ is
\begin{equation*}
\begin{split}
P(W \le x) \;=\; p\,\Phi\!\left(\frac{\ln(x-\tau)-m}{s\sqrt{\alpha}}\right)
\;+\; (1-p)\,\Phi\!\left(\frac{\ln(x-\tau)-m}{s\sqrt{\beta}}\right).
\end{split}
\end{equation*}
For $x \leq \tau$, we instead obtain $P(W \leq x) = 0$.
\end{lemma}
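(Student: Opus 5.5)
The plan is to condition on the binary variable $Y$ and use the law of total probability, exploiting the independence of $Y$ and $N$. Since $Y$ takes only the values $\alpha$ and $\beta$, we write
\[
P(W\le x)=p\,P\bigl(W\le x\mid Y=\alpha\bigr)+(1-p)\,P\bigl(W\le x\mid Y=\beta\bigr).
\]
Because $N$ is independent of $Y$, conditioning on $Y=y$ leaves $N$ standard normal. Hence conditionally $W=e^{s\sqrt{y}N+m}+\tau$ with $N\sim N(0,1)$, so $W-\tau$ is lognormal with log-mean $m$ and log-standard deviation $s\sqrt{y}$.

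Next we evaluate each conditional probability. Fix $y\in\{\alpha,\beta\}$ and $x>\tau$. Then $x-\tau>0$, and since the logarithm is strictly increasing on $(0,\infty)$,
\[
\{e^{s\sqrt{y}N+m}\le x-\tau\}=\{s\sqrt{y}N\le \ln(x-\tau)-m\}.
\]
We assume $s>0$, which is the natural normalization; if $s<0$ the sign can be absorbed because $N$ and $-N$ have the same law. Using $y>0$, divide by $s\sqrt{y}$ to obtain the probability $\Phi\bigl((\ln(x-\tau)-m)/(s\sqrt{y})\bigr)$. Substituting $y=\alpha$ and $y=\beta$ into the mixture identity gives the stated formula.

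For $x\le\tau$, note that $e^{s\sqrt{Y}N+m}>0$ surely, so $W>\tau\ge x$ almost surely and $P(W\le x)=0$.

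There is no real obstacle here; the statement is a direct computation. The only points that need explicit mention are the independence of $N$ and $Y$, which justifies treating $N$ as standard normal after conditioning, and the positivity of $s$, $\alpha$ and $\beta$, which keeps the standardizing division valid and the inequality direction unchanged. The degenerate case $s=0$ is excluded by the moment matching, since a nondegenerate target has positive variance.
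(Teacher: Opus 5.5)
Your proof is correct and follows the same route as the paper. You treat $x\le\tau$ via $W>\tau$ surely, and for $x>\tau$ you invert the logarithm and split on the two values of $Y$ using the independence of $N$ and $Y$. Your explicit remarks on $s>0$ (which the paper gets from $s=\sqrt{x}$ with $x>0$), on $\alpha,\beta>0$, and on independence make precise what the paper's one-line computation uses implicitly.
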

\begin{proof}
If $x\leq\tau$, then $W>\tau$ almost surely. For $x>\tau$, using the relation $W=e^{s\sqrt{Y}N+m}+\tau$, we obtain
\begin{equation*}
\begin{split}
 P(W\le x)&=P(e^{s\sqrt{Y}N+m}+\tau\le x)=P(N\le\frac{\ln(x-\tau)-m}{s\sqrt{Y}})\\
 &=pP(N\le\frac{\ln(x-\tau)-m}{s\sqrt{\alpha}})+(1-p)P(N\le\frac{\ln(x-\tau)-m}{s\sqrt{\beta}})\\
 &=p\Phi(\frac{\ln(x-\tau)-m}{s\sqrt{\alpha}})+(1-p)\Phi(\frac{\ln(x-\tau)-m}{s\sqrt{\beta}}).
\end{split}
\end{equation*}
\end{proof}

The parameters are determined by the four matching conditions
\begin{equation}\label{match}
\begin{split}
 E[W]=E[\bar{S}],\quad E[W^2]=E[\bar{S}^2],\quad E[W^3]=E[\bar{S}^3],\quad E[W^4]=E[\bar{S}^4].
\end{split}
\end{equation}
The next proposition reduces these conditions to two equations for $x=s^2$ and $p$, after which $m$ and $\tau$ follow directly.

\begin{proposition}\label{32}
Suppose that $(x,p)$ solves the system
\begin{equation*}
\begin{split}
 &\phi_Y(\frac{9}{2}x)-3\phi_Y(\frac{1}{2}x)\phi_Y(2x)+2(\phi_Y(\frac{1}{2}x))^3-\eta_{\bar{S}}(\phi_Y(2x)-(\phi_Y(\frac{1}{2}x))^2)^{\frac{3}{2}}=0,\\
 &\phi_Y(8x)-4\phi_Y(\frac{9}{2}x)\phi_Y(\frac{1}{2}x)+6\phi_Y(2x)(\phi_Y(\frac{1}{2}x))^2-3(\phi_Y(\frac{1}{2}x))^4\\
 -&\kappa_{\bar{S}}(\phi_Y(2x)-(\phi_Y(\frac{1}{2}x))^2)^2=0,
\end{split}
\end{equation*}
and let
\begin{equation*}
\begin{split}
 s&=\sqrt{x},\\
 m&=\frac{1}{2}\ln\left(\frac{\sigma_{\bar{S}}^2}{\phi_Y(2s^2)-(\phi_Y(\frac{1}{2}s^2))^2}\right),\\
 \tau&=\mu_{\bar{S}}-\sigma_{\bar{S}}\frac{\phi_Y(\frac{1}{2}s^2)}{\sqrt{\phi_Y(2s^2)-(\phi_Y(\frac{1}{2}s^2))^2}},
\end{split}
\end{equation*}
where $\mu_{\bar{S}},\sigma_{\bar{S}},\eta_{\bar{S}},\kappa_{\bar{S}}$ denote the mean, standard deviation, skewness, and kurtosis of $\bar{S}$, respectively. Then the parameters $p$, $s$, $m$, and $\tau$ satisfy the four matching conditions in \eqref{match}.
 \end{proposition}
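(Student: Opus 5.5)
The plan is to use the fact that matching the first four raw moments is equivalent to matching the mean, the variance, the skewness and the kurtosis. Standardized central moments are invariant under positive affine maps, so the four conditions split naturally. The shape parameters $(x,p)$ control the skewness and kurtosis. The scale $m$ and the location $\tau$ then fit the variance and the mean.

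Write $L=e^{s\sqrt{Y}N}$, so that $W=e^mL+\tau$, and abbreviate $\phi_k=\phi_Y(\tfrac{k^2}{2}s^2)$. Conditioning on $Y$ gives $E[L^k]=E[e^{\frac{k^2}{2}s^2Y}]=\phi_k$, and these four values are exactly the quantities appearing in \eqref{W_moments}.

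\textbf{Step 1 (central moments of $L$).} I will compute the central moments of $L$:
\[
\mu_2=\phi_2-\phi_1^2,\qquad
\mu_3=\phi_3-3\phi_1\phi_2+2\phi_1^3,\qquad
\mu_4=\phi_4-4\phi_3\phi_1+6\phi_2\phi_1^2-3\phi_1^4 .
\]
Here $\phi_1=\phi_Y(\tfrac12x)$, $\phi_2=\phi_Y(2x)$, $\phi_3=\phi_Y(\tfrac92x)$ and $\phi_4=\phi_Y(8x)$ with $x=s^2$.

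\textbf{Step 2 (affine invariance).} Since $W=e^mL+\tau$ with $e^m>0$, the map is a positive affine transformation. It follows that
\[
\operatorname{skew}(W)=\frac{\mu_3}{\mu_2^{3/2}},\qquad
\operatorname{kurt}(W)=\frac{\mu_4}{\mu_2^{2}} .
\]
The two equations in the statement are exactly $\mu_3=\eta_{\bar S}\mu_2^{3/2}$ and $\mu_4=\kappa_{\bar S}\mu_2^2$. Hence any solution $(x,p)$ gives $W$ the skewness and kurtosis of $\bar S$, and this holds whatever values $m$ and $\tau$ take. One caveat must be handled here: we need $\mu_2>0$. This holds because $L$ is nondegenerate whenever $x>0$ and $\alpha,\beta>0$, so the fractional power $\mu_2^{3/2}$ is well defined and $m$ is real.

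\textbf{Step 3 (scale and location).} Next, $\Var(W)=e^{2m}\mu_2$. Setting this equal to $\sigma_{\bar S}^2$ gives the stated formula for $m$, and then $e^m=\sigma_{\bar S}/\sqrt{\mu_2}$. Similarly, $E[W]=e^m\phi_1+\tau$. Setting this equal to $\mu_{\bar S}$ gives
\[
\tau=\mu_{\bar S}-\sigma_{\bar S}\,\phi_1/\sqrt{\mu_2},
\]
which is the stated formula. So $W$ and $\bar S$ share their mean, variance, skewness and kurtosis.

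\textbf{Step 4 (back to raw moments).} Finally, I will convert back to the raw moments. The third and fourth central moments equal $\eta\sigma^3$ and $\kappa\sigma^4$ respectively, so they agree for $W$ and $\bar S$. Raw moments up to order four are polynomial functions of the mean and of the central moments of orders two to four, via the binomial expansion of $E[((X-\mu)+\mu)^k]$. Therefore $E[W^i]=E[\bar S^i]$ for $i=1,\dots,4$, which is \eqref{match}.

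\textbf{Main obstacle.} The argument is almost entirely bookkeeping. The only point needing care is the central-moment algebra in Step 1: the expressions in the proposition must coincide exactly with $\mu_3$ and $\mu_4$ of $L$. This can be checked by expanding $E[(L-\phi_1)^k]$ directly. I would also remark that the proposition is conditional, so the existence of a root with $p\in[0,1]$ is not at issue here.
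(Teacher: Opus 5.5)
Your proposal is correct and follows essentially the same route as the paper's proof. In both, the two equations are the skewness and kurtosis conditions for $W$, which do not depend on $m$ or $\tau$; the variance equation then fixes $m$ and the mean equation fixes $\tau$. You also make explicit two points the paper leaves tacit: the passage from matched mean, variance, skewness and kurtosis back to the raw moments in \eqref{match}, and the requirement $\phi_Y(2x)-\phi_Y(\tfrac12x)^2>0$. That requirement excludes $x=0$, which solves the system trivially for every $p$ but gives no finite $m$; the paper imposes it separately through Definition~\ref{def:admissible-positive-sum}.
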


\begin{proof}
Write
\[
A_1=\phi_Y\!\left(\frac12s^2\right),\qquad
A_2=\phi_Y(2s^2),\qquad
A_3=\phi_Y\!\left(\frac92s^2\right),\qquad
A_4=\phi_Y(8s^2).
\]
The mean and variance of $W$ are
\[
\E[W]=e^mA_1+\tau,
\qquad
\Var(W)=e^{2m}(A_2-A_1^2).
\]
A direct central-moment calculation gives
\[
\eta_W=
\frac{A_3-3A_1A_2+2A_1^3}{(A_2-A_1^2)^{3/2}}
\]
and
\[
\kappa_W=
\frac{A_4-4A_1A_3+6A_1^2A_2-3A_1^4}{(A_2-A_1^2)^2}.
\]
Equating these expressions to the skewness and kurtosis of $\bar S$ gives the
two equations in the proposition after setting $x=s^2$. The variance equation
then yields
\[
m=\frac12\ln\!\left(
\frac{\sigma_{\bar S}^2}{A_2-A_1^2}
\right),
\]
and the mean equation gives
\[
\tau=\mu_{\bar S}-
\sigma_{\bar S}\frac{A_1}{\sqrt{A_2-A_1^2}}.
\]
These values satisfy all four matching conditions.
\end{proof}

The combined procedure for a standard basket is now complete. Proposition~\ref{cor2.8} identifies the correlated lognormal sums that determine the exact price, Proposition~\ref{32} matches each normalized sum with the four-moment proxy, and Lemma~\ref{lemma31} evaluates the resulting probability in closed form. Section~\ref{sec:numerical-evaluation} applies this construction to three-asset standard-basket options. We next extend the four-moment idea to baskets with unrestricted signs.

\section{Approximation for General Basket Options}
\label{sec:general-basket-four-moment}

Previously, we used positive lognormal sums to price standard baskets. We now consider a general basket whose positive and negative weights may yield values on either side of zero. To accommodate this feature, we extend the shifted lognormal construction of \cite{hu2023pricingbasketoptionsmoments} by introducing an independent sign variable. The approximating random variable is
\[
\pi=A(e^{sN+m}+\tau),
\]
where $A$ is independent of $N$ and satisfies $P(A=1)=p$ and $P(A=-1)=1-p$, with $0<p<1$. Its first four moments, denoted by $M_k=E[\pi^k]$, are
\begin{equation}\label{4moments}
\begin{split}
 M_1&=(2p-1)(e^{\frac{1}{2}s^2+m}+\tau),\\
 M_2&=e^{2s^2+2m}+2\tau e^{\frac{1}{2}s^2+m}+\tau^2,\\
 M_3&=(2p-1)(e^{\frac{9}{2}s^2+3m}+3\tau e^{2s^2+2m}+3\tau^2e^{\frac{1}{2}s^2+m}+\tau^3),\\
 M_4&=e^{8s^2+4m}+4\tau e^{\frac{9}{2}s^2+3m}+6\tau^2e^{2s^2+2m}+4\tau^3e^{\frac{1}{2}s^2+m}+\tau^4.
\end{split}
\end{equation}

Suppose the basket is given by
\begin{equation}\label{basket}
\begin{split}
 B(T)=\sum_{i=1}^{n}w_{i}S_{i}(0)e^{(r-\frac{1}{2}\sigma_{i}^{2})T+\sigma_{i}\sqrt{T}N_{i}}.
\end{split}
\end{equation}
Our goal is to match the first four moments of $B(T)$ with those of $\pi$ and recover the parameters $p$, $s$, $m$, and $\tau$. The next lemma gives the required basket moments.

\begin{lemma}[Moments of a correlated lognormal basket]\label{lem:general-basket-moments}
Suppose the basket is given by \eqref{basket} and let $\{\rho_{ij}\}$ denote the corresponding correlation coefficients. Then we have
\begin{equation}
\begin{split}
 E[B(T)]&=e^{rT}\sum_{i=1}^{n}w_{i}S_{i}(0)\\
 E[(B(T))^{2}]&=e^{2rT}\sum_{i=1}^{n}\sum_{j=1}^{n}w_{i}w_{j}S_{i}(0)S_{j}(0)e^{\rho_{ij}\sigma_{i}\sigma_{j}T}\\
 E[(B(T))^{3}]&=e^{3rT}\sum_{i=1}^{n}\sum_{j=1}^{n}\sum_{k=1}^{n}w_{i}w_{j}w_{k}S_{i}(0)S_{j}(0)S_{k}(0)e^{(\rho_{ij}\sigma_{i}\sigma_{j}+\rho_{ik}\sigma_{i}\sigma_{k}+\rho_{jk}\sigma_{j}\sigma_{k})T}\\
 E[(B(T))^{4}]&=e^{4rT}\sum_{i=1}^{n}\sum_{j=1}^{n}\sum_{k=1}^{n}\sum_{l=1}^{n}w_{i}w_{j}w_{k}w_{l}S_{i}(0)S_{j}(0)S_{k}(0)S_{l}(0)\\
 &\times e^{(\rho_{ij}\sigma_{i}\sigma_{j}+\rho_{ik}\sigma_{i}\sigma_{k}+\rho_{il}\sigma_{i}\sigma_{l}+\rho_{jk}\sigma_{j}\sigma_{k}+\rho_{jl}\sigma_{j}\sigma_{l}+\rho_{kl}\sigma_{k}\sigma_{l})T}.
\end{split}
\end{equation}
\end{lemma}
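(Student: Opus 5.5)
The plan is to expand each power of $B(T)$ as a multiple sum and then evaluate each term with the moment generating function of a Gaussian vector. Write $X_i=(r-\frac12\sigma_i^2)T+\sigma_i\sqrt{T}N_i$, so that $B(T)=\sum_{i}w_iS_i(0)e^{X_i}$, where $(N_1,\ldots,N_n)$ is centered Gaussian with unit variances and correlations $\rho_{ij}$. For $k=1,2,3,4$, multinomial expansion gives
\[
E[(B(T))^k]=\sum_{i_1,\ldots,i_k}\Big(\prod_{a=1}^k w_{i_a}S_{i_a}(0)\Big)E\big[e^{X_{i_1}+\cdots+X_{i_k}}\big].
\]
Each expectation is finite, since it is a lognormal moment, so linearity applies and no integrability issue arises.

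The key step is the identity $E[e^{Z}]=e^{E Z+\frac12\Var Z}$ for the Gaussian variable $Z=\sum_a X_{i_a}$. Its mean is $\sum_a (r-\frac12\sigma_{i_a}^2)T$. Its variance is
\[
T\Big(\sum_a\sigma_{i_a}^2+2\sum_{a<b}\rho_{i_ai_b}\sigma_{i_a}\sigma_{i_b}\Big),
\]
using $\rho_{ii}=1$ when indices repeat. The $-\frac12\sigma_{i_a}^2T$ drift terms cancel exactly against the diagonal part of $\frac12\Var Z$. This leaves $e^{krT}\exp\big(T\sum_{a<b}\rho_{i_ai_b}\sigma_{i_a}\sigma_{i_b}\big)$.

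For $k=1$ this yields $e^{rT}\sum_i w_iS_i(0)$. For $k=2,3,4$ the pairwise sums over $a<b$ are exactly the exponents in the statement, with one pair, three pairs and six pairs respectively. This gives the stated formulas.

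The only point needing care is bookkeeping. First, repeated indices must be handled by the convention $\rho_{ii}=1$, and the formula remains valid in that case. Second, one must check the cancellation of the $\frac12\sigma^2$ terms. Alternatively, one could derive the same result from Lemma~\ref{lem:correlated-gaussian-shift}, but the direct Gaussian moment generating function computation is shorter, and I would use it.
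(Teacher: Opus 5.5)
Your proposal is correct and follows essentially the same argument as the paper. Both expand $B(T)^k$ into an ordered multi-index sum, evaluate each term with the Gaussian moment generating function, and note that the $-\frac12\sigma_{i_a}^2T$ drift terms cancel the diagonal variance terms, leaving $e^{krT}\exp\bigl(T\sum_{a<b}\rho_{i_ai_b}\sigma_{i_a}\sigma_{i_b}\bigr)$. Your remark about the convention $\rho_{ii}=1$ for repeated indices is a small bookkeeping point that the paper leaves implicit.
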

\begin{proof}
Write
\[
X_i=w_iS_i(0)\exp\!\left(\left(r-\frac12\sigma_i^2\right)T
+\sigma_i\sqrt{T}N_i\right),
\qquad
B(T)=\sum_{i=1}^{n}X_i.
\]
For any indices $i_1,\ldots,i_k$, joint normality gives
\[
E\!\left[\prod_{a=1}^{k}X_{i_a}\right]
=
e^{krT}\prod_{a=1}^{k}w_{i_a}S_{i_a}(0)
\exp\!\left(
\sum_{1\leq a<b\leq k}
\rho_{i_ai_b}\sigma_{i_a}\sigma_{i_b}T
\right).
\]
The terms $-\frac12\sigma_{i_a}^2T$ in the lognormal drifts cancel the
diagonal terms produced by the Gaussian moment-generating function.
Expanding $B(T)^k$ and applying this identity for $k=1,2,3,4$ gives the four
formulas.
\end{proof}

\begin{remark}[Futures-price implementation]
\label{rem:futures-implementation}
The empirical application in Section~\ref{subsec:crack-application} is based on
futures prices rather than spot prices. Under the risk-neutral measure, a
futures price with constant volatility is modeled as
\[
F_i(T)=F_i(0)\exp\!\left(
-\frac12\sigma_i^2T+\sigma_i\sqrt{T}N_i
\right).
\]
The moment formulas above continue to apply after removing the common factors
$e^{krT}$ from the $k$th moments. Discounting remains outside the payoff
expectation. Thus, the futures implementation changes only the drift
convention used to compute the basket moments; the signed-proxy construction
and the option-pricing formula are unchanged.
\end{remark}

The four parameters of the signed proxy are recovered in two stages. For a
fixed sign probability $p$, the first three moments determine a shifted
lognormal component. The fourth moment then determines $p$.

\begin{theorem}[Four-moment parameter recovery]
\label{th12}
Let $b_k=\E[B(T)^k]$ for $k=1,2,3,4$, and suppose that the skewness of
$B(T)$ is nonzero. Search over $p\in(1/2,1)$ when the basket skewness is
positive and over $p\in(0,1/2)$ when it is negative. For each candidate $p$,
define
\[
\bar M_1(p)=\frac{b_1}{2p-1},
\qquad
\bar M_2(p)=b_2,
\qquad
\bar M_3(p)=\frac{b_3}{2p-1},
\]
and
\begin{equation}
\label{eq:conditional-moments}
\begin{aligned}
\mu(p)&=\bar M_1(p),\\
\sigma(p)&=\sqrt{\bar M_2(p)-\bar M_1(p)^2},\\
\eta(p)&=
\frac{\bar M_3(p)-3\bar M_2(p)\bar M_1(p)+2\bar M_1(p)^3}
{\sigma(p)^3}.
\end{aligned}
\end{equation}
Set
\begin{equation}
\label{eq:stable-xp}
x(p)=2\cosh\!\left[
\frac{2}{3}\operatorname{arsinh}\!\left(\frac{|\eta(p)|}{2}\right)
\right]-1,
\end{equation}
and define
\begin{equation}
\label{eq:signed-proxy-parameters}
\begin{aligned}
s(p)&=\sqrt{\ln x(p)},\\
m(p)&=\frac12\ln\!\left(
\frac{\sigma(p)^2}{x(p)[x(p)-1]}
\right),\\
\tau(p)&=\mu(p)-\frac{\sigma(p)}{\sqrt{x(p)-1}}.
\end{aligned}
\end{equation}
If $p$ is an admissible solution of
\begin{equation}
\label{p}
\begin{aligned}
0={}&b_4-
\Bigl\{
\sigma(p)^4\bigl[x(p)^4+2x(p)^3+3x(p)^2-3\bigr]\\
&\quad+4\mu(p)\sigma(p)^3\bigl[x(p)+2\bigr]\sqrt{x(p)-1}
+6\mu(p)^2\sigma(p)^2+\mu(p)^4
\Bigr\},
\end{aligned}
\end{equation}
then $p$, $s(p)$, $m(p)$, and $\tau(p)$ match the first four moments of
$B(T)$ with those of the signed proxy
$\pi=A(e^{sN+m}+\tau)$.
\end{theorem}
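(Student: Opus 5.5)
The plan is to condition on the sign variable $A$ so that the four signed matching conditions become four \emph{unsigned} conditions on the shifted lognormal $L=e^{sN+m}+\tau$. A three-moment shifted-lognormal fit then determines $s,m,\tau$ in closed form for each fixed $p$, and the fourth moment is reduced to the scalar equation \eqref{p}.

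\textbf{Step 1: reduction to conditional moments.} Since $A$ is independent of $N$ and $A^k=A$ for odd $k$ while $A^k=1$ for even $k$, we have
\[
M_k=\E[A^k]\,\E[L^k],
\qquad\text{with}\qquad
\E[A]=\E[A^3]=2p-1,\quad \E[A^2]=\E[A^4]=1.
\]
This is exactly the structure displayed in \eqref{4moments}. For $p\neq 1/2$, the conditions $M_k=b_k$ for $k=1,\dots,4$ are therefore equivalent to
\[
\E[L]=\bar M_1(p),\quad \E[L^2]=\bar M_2(p),\quad \E[L^3]=\bar M_3(p),\quad \E[L^4]=b_4 .
\]
Equivalently, $L$ must have mean $\mu(p)$, standard deviation $\sigma(p)$ and skewness $\eta(p)$ as defined in \eqref{eq:conditional-moments}, together with fourth raw moment $b_4$.

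\textbf{Step 2: the three-moment fit.} Put $x=e^{s^2}$. For the shifted lognormal $L$ the standard moments are
\[
\E[L]=e^m\sqrt{x}+\tau,\qquad
\Var(L)=e^{2m}x(x-1),\qquad
\eta_L=(x+2)\sqrt{x-1},
\]
and $\eta_L$ does not depend on $m$ or $\tau$.

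To solve $\eta_L=\eta(p)$, substitute $y=\sqrt{x-1}$. The equation becomes the depressed cubic $y^3+3y=\eta$, which has a unique real root because the left side is strictly increasing. Using the identity $4\sinh^3 u+3\sinh u=\sinh 3u$, that root is
\[
y=2\sinh\!\left(\tfrac13\operatorname{arsinh}(\eta/2)\right).
\]
Then $x=1+y^2=1+4\sinh^2(\theta/3)=2\cosh(2\theta/3)-1$ with $\theta=\operatorname{arsinh}(\eta/2)$, which is \eqref{eq:stable-xp}.

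The variance equation gives $m(p)$ as in \eqref{eq:signed-proxy-parameters}. For the shift, note that $e^m\sqrt{x}=\sigma/\sqrt{x-1}$, so the mean equation gives $\tau(p)$.

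Here the absolute value $|\eta(p)|$ coincides with $\eta(p)$ precisely when $\eta(p)>0$. I will argue that the prescribed search half-interval for $p$ makes the sign of $2p-1$ agree with that of the basket skewness. Any remaining sign requirement, together with $\sigma(p)^2>0$, will be folded into the meaning of "admissible".

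\textbf{Step 3: the fourth moment.} I expand $\E[L^4]$ about the mean:
\[
\E[L^4]=\kappa_L\sigma^4+4\mu\sigma^3\eta_L+6\mu^2\sigma^2+\mu^4 .
\]
The lognormal kurtosis is $\kappa_L=x^4+2x^3+3x^2-3$, which follows by a direct central-moment computation from $\E[e^{k(sN+m)}]=e^{km}x^{k^2/2}$. Substituting $\eta_L=(x+2)\sqrt{x-1}$ turns the condition $\E[L^4]=b_4$ into exactly \eqref{p}. Consequently any root $p$ of \eqref{p}, with $s,m,\tau$ defined as in the theorem, satisfies all four conditions of Step 1, and hence all four conditions $M_k=b_k$.

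\textbf{Main obstacle.} The algebra is routine; the only delicate point is well-definedness. We need $p\neq1/2$, $\sigma(p)^2>0$, and $\eta(p)>0$ so that the absolute value in \eqref{eq:stable-xp} is harmless. Note that $\sigma(p)^2=b_2-b_1^2/(2p-1)^2$ can become negative as $p\to1/2$. I will verify the kurtosis polynomial and the $\eta>0$ compatibility carefully, and state these three conditions explicitly as the content of admissibility rather than attempting to prove that they always hold.
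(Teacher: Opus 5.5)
Your proposal is correct and follows the paper's own argument. You condition on the independent sign $A$ to reduce the problem to fitting $e^{sN+m}+\tau$ to $\bar M_1(p),\bar M_2(p),\bar M_3(p)$, solve the three-moment shifted-lognormal fit, and equate the resulting fourth raw moment to $b_4$ using $\E[A^4]=1$. Your hyperbolic solution of $y^3+3y=\eta$ and the expansion $\kappa_L\sigma^4+4\mu\sigma^3\eta_L+6\mu^2\sigma^2+\mu^4$ supply the details the paper leaves as a ``standard'' calculation.

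Your insistence on $\eta(p)>0$ is well founded, and it is sharper than the paper's proof, which credits the skewness sign to the restriction on $p-1/2$. Writing $c=2p-1$ and $\mu_3=b_3-3b_1b_2+2b_1^3$ for the basket's third central moment, one has $\bar M_3-3\bar M_2\bar M_1+2\bar M_1^3=\mu_3/c+2b_1^3(1-c^2)/c^3$. Hence $\eta(p)$ can be negative on the prescribed half-interval, even with $\sigma(p)^2>0$, when $b_1$ and the basket skewness have opposite signs. The paper excludes this case only implicitly, through the residual condition in Definition~\ref{def:admissible-signed}.
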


\begin{remark}[Stable evaluation of $x(p)$]
\label{rem:stable-x}
The hyperbolic expression in \eqref{eq:stable-xp} is algebraically equivalent
to the usual Cardano formula. It is preferable in numerical work because it
remains real valued when floating-point cancellation makes one cube-root
argument slightly negative. This issue is relevant when the fitted probability
is close to $0$ or $1$.
\end{remark}

\begin{proof}
For a fixed $p\neq1/2$, matching the first three moments of
$\pi=A(e^{sN+m}+\tau)$ is equivalent to matching the first three moments of the
positive component $e^{sN+m}+\tau$ with
$\bar M_1(p)$, $\bar M_2(p)$, and $\bar M_3(p)$. The standard three-moment
shifted lognormal calculation gives \eqref{eq:stable-xp} and
\eqref{eq:signed-proxy-parameters}. Substituting these parameters into the
fourth raw moment of the positive component yields
\[
\sigma(p)^4\bigl[x(p)^4+2x(p)^3+3x(p)^2-3\bigr]
+4\mu(p)\sigma(p)^3\bigl[x(p)+2\bigr]\sqrt{x(p)-1}
+6\mu(p)^2\sigma(p)^2+\mu(p)^4.
\]
The fourth moment of the sign variable is one, so the same expression is the
fourth moment of $\pi$. Equating it to $b_4$ gives \eqref{p}. The sign
restriction on $p-1/2$ ensures that the proxy and the basket have the same sign
of skewness. The admissibility conditions in
Definition~\ref{def:admissible-signed} ensure that all recovered parameters are
real and finite.
\end{proof}

\begin{theorem}[Analytical general-basket call approximation]
\label{thm:general-basket-price}
Let $p$, $s$, $m$, and $\tau$ be an admissible parameter set from
Theorem~\ref{th12}. Define
\begin{equation}
\label{eq:positive-branch-stoploss}
G_+(K)=
\begin{cases}
 e^{m+\frac12s^2}+\tau-K, & K\leq\tau,\\[0.3em]
 e^{m+\frac12s^2}\Phi(d_{11})-(K-\tau)\Phi(d_{12}), & K>\tau,
\end{cases}
\end{equation}
where, for $K>\tau$,
\[
d_{11}=\frac{m+s^2-\ln(K-\tau)}{s},
\qquad
d_{12}=\frac{m-\ln(K-\tau)}{s}.
\]
Also define
\begin{equation}
\label{eq:negative-branch-stoploss}
G_-(K)=
\begin{cases}
 (-K-\tau)\Phi(d_{22})-e^{m+\frac12s^2}\Phi(d_{21}), & K<-\tau,\\[0.3em]
 0, & K\geq-\tau,
\end{cases}
\end{equation}
where, for $K<-\tau$,
\[
d_{21}=\frac{\ln(-K-\tau)-m-s^2}{s},
\qquad
d_{22}=\frac{\ln(-K-\tau)-m}{s}.
\]
Then the general-basket call price is approximated by
\begin{equation}
\label{pp3_1}
\widehat\Pi(K)=e^{-rT}\bigl[pG_+(K)+(1-p)G_-(K)\bigr].
\end{equation}
\end{theorem}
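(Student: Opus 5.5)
The approximation replaces $B(T)$ by the signed proxy $\pi=A(e^{sN+m}+\tau)$, so the approximate call price is $e^{-rT}\E[(\pi-K)^+]$. The plan is to compute this expectation exactly, conditioning on the sign variable $A$. Since $A$ is independent of $N$,
\[
\E[(\pi-K)^+]=p\,\E\bigl[(e^{sN+m}+\tau-K)^+\bigr]+(1-p)\,\E\bigl[(-e^{sN+m}-\tau-K)^+\bigr].
\]
The task is then to show that the first expectation equals $G_+(K)$ and the second equals $G_-(K)$. Multiplying by $e^{-rT}$ gives \eqref{pp3_1}. Admissibility from Theorem~\ref{th12} guarantees that $p\in(0,1)$ and that $s>0$, $m$, $\tau$ are real and finite, so every quantity below is well defined.

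For the positive branch, write $L=e^{sN+m}>0$ and distinguish two cases.

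\emph{Case $K\le\tau$.} Here $L+\tau-K>0$ almost surely, so the positive part can be dropped. The expectation is linear, and $\E[L]=e^{m+\frac12 s^2}$ gives the first line of \eqref{eq:positive-branch-stoploss}.

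\emph{Case $K>\tau$.} The expectation is a Black--Scholes-type stop-loss with effective strike $K-\tau$. The exercise event is $\{N>(\ln(K-\tau)-m)/s\}=\{N>-d_{12}\}$, so $\PP(L>K-\tau)=\Phi(d_{12})$. For the remaining term, apply Lemma~\ref{lem:independent-gaussian-shift} with $n=1$ and shift $a=s$:
\[
\E[L\1_{\{L>K-\tau\}}]=e^{m+\frac12 s^2}\,\PP(N+s>-d_{12})=e^{m+\frac12s^2}\Phi(d_{12}+s)=e^{m+\frac12s^2}\Phi(d_{11}).
\]
This produces the second line of \eqref{eq:positive-branch-stoploss}.

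For the negative branch, the payoff is $(-L-\tau-K)^+=(-K-\tau-L)^+$, which is a put on $L$ with strike $-K-\tau$. Again there are two cases.

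\emph{Case $K\ge-\tau$.} The strike is nonpositive. Since $L>0$, the payoff vanishes almost surely and the expectation is $0$.

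\emph{Case $K<-\tau$.} The exercise event is $\{L<-K-\tau\}=\{N<d_{22}\}$, so
\[
\E[(-K-\tau-L)^+]=(-K-\tau)\Phi(d_{22})-\E[L\1_{\{N<d_{22}\}}].
\]
The same Gaussian shift gives $\E[L\1_{\{N<d_{22}\}}]=e^{m+\frac12s^2}\Phi(d_{22}-s)=e^{m+\frac12s^2}\Phi(d_{21})$, which matches \eqref{eq:negative-branch-stoploss}.

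There is no real obstacle; the statement is essentially a definition of the approximation followed by two standard lognormal stop-loss calculations. The only points needing care are the boundary cases, where the sign of $\tau$ can make the strike of either branch nonpositive, and keeping the signs of $d_{21}$ and $d_{22}$ consistent in the put-type negative branch. Finally, the factor $e^{-rT}$ is simply carried outside, as noted in Remark~\ref{rem:futures-implementation}.
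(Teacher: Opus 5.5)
Your proposal is correct and follows the same route as the paper's proof. You condition on the independent sign variable $A$ and then evaluate the upper-tail stop-loss $\E[(e^{sN+m}+\tau-K)^+]$ and the lower-tail expectation $\E[(-K-\tau-e^{sN+m})^+]$ case by case; your Gaussian-shift computations ($d_{11}=d_{12}+s$, $d_{21}=d_{22}-s$) simply fill in the details the paper calls ``standard'' (Definition~\ref{def:admissible-signed} also admits the degenerate endpoints $p\in\{0,1\}$, but your conditioning identity holds there too).
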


\begin{proof}
Conditioning on the independent sign variable gives
\[
\E[(\pi-K)^+]
=p\E[(e^{sN+m}+\tau-K)^+]
+(1-p)\E[(-e^{sN+m}-\tau-K)^+].
\]
The first conditional payoff is always positive when $K\leq\tau$ and otherwise
is a standard truncated lognormal stop-loss expectation. The second conditional
payoff is zero when $K\geq-\tau$ and is a lower-tail truncated lognormal
expectation when $K<-\tau$. Evaluating these two expectations gives
\eqref{eq:positive-branch-stoploss} and
\eqref{eq:negative-branch-stoploss}; discounting gives \eqref{pp3_1}.
\end{proof}

\section{Admissibility, Financial Properties, and Pricing-Error Analysis}
\label{sec:admissibility-properties}

The moment equations derived in the previous sections are nonlinear. A numerical solution is useful only if it produces a valid probability model and finite pricing parameters. This section makes these requirements explicit. We first state the admissibility conditions for the shifted mixture used for positive lognormal sums and for the signed proxy used for general baskets. We then give a practical rule for selecting among several numerical roots and establish the main financial properties of the direct general-basket approximation. The last part of the section connects distributional approximation to option valuation by deriving exact pricing-error representations for both general and standard baskets.

\subsection{Admissibility and root-selection}
\label{subsec:admissibility-positive-sum}

Consider the proxy
\[
W=e^{s\sqrt{Y}N+m}+\tau,
\qquad
P(Y=\alpha)=p,\qquad P(Y=\beta)=1-p,
\]
introduced in Section~\ref{sec:standard-basket-framework}. Write $x=s^2$ and define
\begin{equation}\label{eq:mixture-variance-factor}
D_Y(x,p)
=
\phi_Y(2x)-\phi_Y\!\left(\frac{x}{2}\right)^2,
\end{equation}
where
\[
\phi_Y(u)=pe^{\alpha u}+(1-p)e^{\beta u}.
\]

\begin{definition}[Admissible lognormal-sum solution]
\label{def:admissible-positive-sum}
A pair $(x,p)$ solving the two equations in Proposition~\ref{32} is called admissible if
\[
\alpha>0,\qquad \beta>0,\qquad 0\leq p\leq 1,\qquad x>0,
\]
and
\[
D_Y(x,p)>0.
\]
In addition, the values of $m$ and $\tau$ recovered from Proposition~\ref{32} must be finite, and the four moment residuals must be smaller than a prescribed numerical tolerance.
\end{definition}

The condition $D_Y(x,p)>0$ ensures that the variance of $W$ is positive and that the logarithm appearing in the formula for $m$ is well defined. The endpoint cases $p=0$ and $p=1$ are allowed as degenerate limits. In those cases, the binary mixture reduces to a single shifted lognormal component. For any admissible solution, Lemma~\ref{lemma31} defines a genuine cumulative distribution function. In particular, for a threshold $z$,
\[
P(W\leq z)=0\qquad\text{when }z\leq\tau,
\]
while the two-component normal-CDF expression applies when $z>\tau$.

\begin{proposition}\label{prop:valid-mixture-cdf}
For every admissible solution in Definition~\ref{def:admissible-positive-sum}, the function
\[
F_W(z)=P(W\leq z)
\]
is nondecreasing, right-continuous, takes values in $[0,1]$, and satisfies
\[
\lim_{z\to-\infty}F_W(z)=0,
\qquad
\lim_{z\to\infty}F_W(z)=1.
\]
\end{proposition}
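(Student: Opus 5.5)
The cleanest route is to show that $W$ is a genuine real-valued random variable whenever the parameters are admissible. Then $F_W$ is its distribution function, and the four properties follow from general facts about distribution functions. Concretely, admissibility gives $x=s^2>0$ and $\alpha,\beta>0$, so $s\sqrt{Y}$ takes only the real values $s\sqrt{\alpha}$ and $s\sqrt{\beta}$. The recovered $m$ and $\tau$ are finite by Definition~\ref{def:admissible-positive-sum}, and $0\le p\le1$ makes the law of $Y$ a valid two-point probability measure. Therefore $W=e^{s\sqrt{Y}N+m}+\tau$ is a finite measurable function of $(Y,N)$ on a probability space.

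For such a random variable, monotonicity and values in $[0,1]$ are immediate. Right-continuity follows from continuity from above of probability measures: $\{W\le z_k\}\downarrow\{W\le z\}$ when $z_k\downarrow z$. The limits at $\pm\infty$ follow from continuity from below and above applied to $\{W\le k\}\uparrow\Omega$ and $\{W\le -k\}\downarrow\emptyset$. This is where finiteness of $W$, and hence of $m$ and $\tau$, is used.

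To tie the argument to the explicit formula of Lemma~\ref{lemma31}, I would also verify the properties directly from that formula as a consistency check. For $z\le\tau$ we have $F_W=0$, which gives the limit at $-\infty$. For $z>\tau$, $F_W$ is a convex combination with weights $p,1-p\in[0,1]$ of two functions $\Phi\bigl((\ln(z-\tau)-m)/(s\sqrt{\alpha})\bigr)$ and $\Phi\bigl((\ln(z-\tau)-m)/(s\sqrt{\beta})\bigr)$. Each is continuous and nondecreasing in $z$, since $s\sqrt{\alpha},s\sqrt{\beta}>0$ and $\ln$ and $\Phi$ are increasing. Each tends to $1$ as $z\to\infty$. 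As $z\downarrow\tau$, each tends to $0$, so $F_W$ is in fact continuous at $\tau$, and nondecreasing across $\tau$ as well.

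The only delicate point is the endpoint cases $p\in\{0,1\}$ and the role of $D_Y(x,p)>0$. I would handle the endpoints by noting that the formula collapses to a single shifted-lognormal CDF, for which the same argument applies. I would remark that $D_Y>0$ is not needed for the distributional properties themselves: it only guarantees that $m$ is defined, which admissibility already provides.
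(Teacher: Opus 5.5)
Your proof is correct, and your consistency check via Lemma~\ref{lemma31} is essentially the paper's argument. The paper observes that, conditional on $Y\in\{\alpha,\beta\}$, $W$ is shifted lognormal, so $F_W$ is a convex combination with weights $p$ and $1-p$ of two valid distribution functions. Your primary route (admissibility makes $W$ a finite real random variable, so $F_W$ is automatically a distribution function) is a slightly more general form of the same observation. Your remark that $D_Y(x,p)>0$ plays no role is right; in fact it holds automatically, since $D_Y(x,p)=\Var\bigl(e^{s\sqrt{Y}N}\bigr)>0$ whenever $x>0$ and $\alpha,\beta>0$.
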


\begin{proof}
Conditional on $Y=\alpha$ or $Y=\beta$, the variable $W$ is a shifted lognormal random variable. Its conditional distribution function is therefore valid. The unconditional distribution function is a convex combination of the two conditional distribution functions, with weights $p$ and $1-p$. The stated properties follow immediately.
\end{proof}

For a general basket, the proxy is
\[
\pi=A\left(e^{sN+m}+\tau\right),
\qquad
P(A=1)=p,\qquad P(A=-1)=1-p.
\]
The formulas in Theorem~\ref{th12} contain the quantities
\[
\bar M_1(p)=\frac{E[B(T)]}{2p-1},
\qquad
\bar M_2(p)=E[B(T)^2],
\qquad
\bar M_3(p)=\frac{E[B(T)^3]}{2p-1}.
\]
Thus, $p=1/2$ is excluded from the direct parameter recovery whenever the odd moments are nonzero.

\begin{definition}[Admissible signed-proxy solution]
\label{def:admissible-signed}
A solution $p$ of \eqref{p} is called admissible if the following conditions hold:
\begin{enumerate}
\item $0\leq p\leq 1$ and $p\neq 1/2$;
\item $p>1/2$ when $\eta_{B(T)}>0$, and $p<1/2$ when $\eta_{B(T)}<0$;
\item
\[
\sigma(p)^2=\bar M_2(p)-\bar M_1(p)^2>0;
\]
\item $x(p)>1$, so that $s=\sqrt{\ln x(p)}$ is real and strictly positive;
\item the recovered values of $m$ and $\tau$ are finite;
\item the normalized residuals of the four matching equations are below a prescribed tolerance.
\end{enumerate}
\end{definition}

The cases $p=0$ and $p=1$ correspond to degenerate one-sided proxies and can be treated as limiting cases. By contrast, $p=1/2$ cannot be inserted into the formulas of Theorem~\ref{th12}. A basket with zero skewness therefore requires a separate symmetric treatment or a limiting numerical procedure. The present parameterization is intended for baskets with nonzero skewness.

\begin{remark}\label{rem:boundary-p}
Values of $p$ reported numerically as $0$ or $1$ may represent exact boundary solutions or interior solutions rounded to the displayed precision. In either case, the implementation should evaluate the unrounded value and record whether the fitted proxy is genuinely mixed or has effectively collapsed to one component.
\end{remark}

The nonlinear systems may have no admissible root, one admissible root, or several admissible roots. We use the following selection rule.

For a candidate parameter vector $\vartheta$, define the normalized moment residual
\begin{equation}\label{eq:moment-residual}
\mathcal R_4(\vartheta)
=
\sum_{k=1}^{4}
\frac{\left|E[Z_\vartheta^k]-E[Z^k]\right|}
     {1+\left|E[Z^k]\right|},
\end{equation}
where $Z$ denotes the target variable and $Z_\vartheta$ denotes its fitted proxy. For the positive-sum construction, $Z=\bar S$ and $Z_\vartheta=W$. For the general-basket construction, $Z=B(T)$ and $Z_\vartheta=\pi$.

The numerical procedure is:

\begin{enumerate}
\item search for all roots over the relevant parameter domain;
\item remove roots that fail Definitions~\ref{def:admissible-positive-sum} or~\ref{def:admissible-signed};
\item among the remaining roots, select the one with the smallest value of $\mathcal R_4$;
\item if several roots have nearly identical four-moment residuals, use the fifth-moment discrepancy as a diagnostic tie-breaker;
\item if no admissible four-moment solution is found, return the corresponding three-moment approximation and report that the fallback was used.
\end{enumerate}

This rule separates mathematical feasibility from numerical accuracy. It also prevents a solver from accepting a root that satisfies one transformed equation while producing an invalid variance, a complex parameter, or a poor reconstruction of the original moments.

\subsection{Financial properties of the direct general-basket approximation}
\label{subsec:financial-properties}

Let $\pi$ be an admissible signed proxy and define the approximate call and put prices by
\begin{equation}\label{eq:proxy-call-put}
\widehat C(K)=e^{-rT}E[(\pi-K)^+],
\qquad
\widehat P(K)=e^{-rT}E[(K-\pi)^+].
\end{equation}
Because these prices are expectations of standard option payoffs under a valid proxy distribution, they preserve the basic shape restrictions with respect to the strike.

\begin{proposition}[Strike monotonicity, convexity, and put--call parity]
\label{prop:financial-properties}
Suppose that the signed proxy is admissible and matches the first moment of $B(T)$. Then:

\begin{enumerate}
\item $\widehat C(K)\geq0$ and $\widehat P(K)\geq0$;
\item $\widehat C(K)$ is decreasing and convex in $K$, while $\widehat P(K)$ is increasing and convex in $K$;
\item for any $K_1,K_2\in\mathbb R$,
\[
\left|\widehat C(K_2)-\widehat C(K_1)\right|
\leq e^{-rT}|K_2-K_1|;
\]
\item the approximate prices satisfy
\begin{equation}\label{eq:proxy-put--call-parity}
\widehat C(K)-\widehat P(K)
=
e^{-rT}\left(E[B(T)]-K\right)
=
\sum_{i=1}^{n}w_iS_i(0)-Ke^{-rT};
\end{equation}
\item at every strike at which the distribution function of $\pi$ is continuous,
\begin{equation}\label{eq:proxy-strike-derivative}
\frac{\partial \widehat C(K)}{\partial K}
=
-e^{-rT}P(\pi>K).
\end{equation}
If $\pi$ has a density $f_\pi$, then
\begin{equation}\label{eq:proxy-second-strike-derivative}
\frac{\partial^2\widehat C(K)}{\partial K^2}
=
e^{-rT}f_\pi(K)\geq0.
\end{equation}
\end{enumerate}
\end{proposition}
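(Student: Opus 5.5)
The plan is to treat $\widehat C$ and $\widehat P$ purely as expectations of the payoff functions $k\mapsto(\pi-k)^+$ and $k\mapsto(k-\pi)^+$ under the law of $\pi$. Admissibility (Definition~\ref{def:admissible-signed}) guarantees that $s>0$ and that $m$, $\tau$ are finite. Hence $\pi=A(e^{sN+m}+\tau)$ is a genuine random variable: it is a two-point sign mixture of a shifted lognormal and its reflection, and it has finite moments of all orders, in particular $E|\pi|<\infty$. All expectations below are therefore finite.

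Items 1--3 follow pathwise. For each fixed $\omega$, the map $K\mapsto(\pi-K)^+$ is nonnegative, nonincreasing, convex and $1$-Lipschitz. The map $K\mapsto(K-\pi)^+$ is nonnegative, nondecreasing, convex and $1$-Lipschitz. Taking expectations preserves each of these properties, since convex combinations of convex functions are convex and monotone inequalities integrate. For item 3, I would use $|(\pi-K_2)^+-(\pi-K_1)^+|\le|K_2-K_1|$ and then multiply by $e^{-rT}$.

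Item 4 comes from the identity $(\pi-K)^+-(K-\pi)^+=\pi-K$. Taking expectations gives
\[
\widehat C(K)-\widehat P(K)=e^{-rT}(E[\pi]-K).
\]
The first-moment matching assumption gives $E[\pi]=E[B(T)]$, and Lemma~\ref{lem:general-basket-moments} gives $E[B(T)]=e^{rT}\sum_i w_iS_i(0)$. Together these yield \eqref{eq:proxy-put--call-parity}. (In the futures convention of Remark~\ref{rem:futures-implementation}, the same line holds with $F_i(0)$ in place of $S_i(0)e^{rT}$.)

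Item 5 is the only step that needs some care. I would first write
\[
E[(\pi-K)^+]=\int_K^\infty P(\pi>u)\,\ud u,
\]
which follows from Fubini/Tonelli. The integral converges because $E|\pi|<\infty$. Then $\widehat C(K)=e^{-rT}\int_K^\infty P(\pi>u)\,\ud u$. The integrand $u\mapsto P(\pi>u)$ is right-continuous, so the fundamental theorem of calculus gives the right derivative $-e^{-rT}P(\pi>K)$ everywhere. At continuity points of $F_\pi$, the left derivative $-e^{-rT}P(\pi\ge K)$ agrees with it, which proves \eqref{eq:proxy-strike-derivative}.

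In fact, $\pi$ is absolutely continuous. Its law is $p\cdot\mathrm{Law}(e^{sN+m}+\tau)+(1-p)\cdot\mathrm{Law}(-e^{sN+m}-\tau)$, and both components have densities because $s>0$. So $F_\pi$ is continuous everywhere, with density
\[
f_\pi(K)=p\,g(K-\tau)+(1-p)\,g(-K-\tau),
\]
where $g$ is the lognormal density. Differentiating $-e^{-rT}(1-F_\pi(K))$ once more gives $e^{-rT}f_\pi(K)\ge0$ at every continuity point of $f_\pi$, which is every $K$ except possibly $\pm\tau$.

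As a cross-check, one can differentiate the explicit formulas of Theorem~\ref{thm:general-basket-price} directly: $G_+'(K)=-\Phi(d_{12})$ for $K>\tau$ and $-1$ for $K\le\tau$, and similarly for $G_-$. The main (mild) obstacle is the justification of differentiation under the integral and the treatment of the points $K=\pm\tau$. The tail-integral representation handles the first of these, and absolute continuity of the law of $\pi$ handles the second.
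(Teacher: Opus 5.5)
Your proof is correct and follows the paper's argument. Items 1--3 come from the pathwise monotonicity, convexity and $1$-Lipschitz properties of the payoffs, which are preserved under expectation; parity comes from $(\pi-K)^+-(K-\pi)^+=\pi-K$ combined with first-moment matching and Lemma~\ref{lem:general-basket-moments}; item 5 is the standard strike derivatives. Your tail-integral and one-sided-derivative treatment of item 5 fills in what the paper calls ``standard'' (it is Lemma~\ref{lem:stop-loss-representation}). Your caveat at $\pm\tau$ is unnecessary: the lognormal density tends to $0$ at the origin, so $f_\pi$ is continuous everywhere.
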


\begin{proof}
Nonnegativity follows directly from the payoffs in \eqref{eq:proxy-call-put}. For every fixed value of $\pi$, the function $(\pi-K)^+$ is decreasing and convex in $K$, while $(K-\pi)^+$ is increasing and convex. Expectations preserve these properties. The Lipschitz bound follows from
\[
\left|(x-K_2)^+-(x-K_1)^+\right|\leq |K_2-K_1|.
\]
Moreover,
\[
(\pi-K)^+-(K-\pi)^+=\pi-K.
\]
Taking expectations, discounting, and using $E[\pi]=E[B(T)]$ gives \eqref{eq:proxy-put--call-parity}. Under the risk-neutral model,
\[
E[B(T)]=e^{rT}\sum_{i=1}^{n}w_iS_i(0),
\]
which gives the final expression. The derivative formulas are the standard strike derivatives of an option value written as an expectation under a continuous distribution.
\end{proof}

A further consequence of Jensen's inequality is the lower bound
\begin{equation}\label{eq:proxy-forward-lower-bound}
\widehat C(K)
\geq
e^{-rT}\left(E[B(T)]-K\right)^+
=
\left(\sum_{i=1}^{n}w_iS_i(0)-Ke^{-rT}\right)^+.
\end{equation}
Thus, the direct approximation respects the usual forward intrinsic-value bound.

\begin{remark}[Scope of the financial guarantees]
\label{rem:scope-financial-properties}
Proposition~\ref{prop:financial-properties} applies directly to the general-basket price obtained from the single signed proxy $\pi$. The standard-basket approximation in Proposition~\ref{cor2.8} is assembled from several lognormal-sum probabilities, each fitted separately and with parameters that may change with the strike. Each fitted probability is valid and lies in $[0,1]$, but global monotonicity and convexity of the assembled standard-basket price are not automatic. The numerical study should therefore check strike monotonicity, discrete convexity, nonnegativity, and put--call parity over the full strike grid.
\end{remark}

\subsection{A tail-based interpretation of the pricing error}
\label{subsec:tail-error}

Moment matching controls a finite number of global distributional features. It does not, by itself, guarantee an accurate option price. Two random variables can have the same first four moments and still assign different probabilities to the region above the strike. Since a call option pays only in that region, the relevant question is not only whether the moments are close, but also whether the approximating distribution reproduces the upper part of the basket distribution.

We first recall a standard stop-loss identity. It is stated explicitly because it provides the link between a distributional approximation and the corresponding option-price error.

\begin{lemma}[Stop-loss representation]
\label{lem:stop-loss-representation}
Let $X$ be an integrable random variable with distribution function $F_X(x)=P(X\leq x)$. Then, for every $K\in\mathbb R$,
\begin{equation}
\label{eq:stop-loss-representation}
E[(X-K)^+]
=
\int_K^\infty P(X>x)\,dx
=
\int_K^\infty \bigl[1-F_X(x)\bigr]\,dx.
\end{equation}
\end{lemma}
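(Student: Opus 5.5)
The plan is to write the stop-loss payoff as an integral of an indicator and then exchange expectation and integral using Tonelli's theorem. For every real $X$ and every $K\in\R$ we have the pathwise identity
\[
(X-K)^+=\int_K^\infty \1_{\{X>x\}}\,\ud x .
\]
If $X\le K$, the integrand vanishes for all $x\ge K$, so both sides are zero. If $X>K$, the integrand equals one exactly on $[K,X)$, so the integral is $X-K$.

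Next I take expectations of both sides. The integrand $(\omega,x)\mapsto \1_{\{X(\omega)>x\}}\1_{\{x\ge K\}}$ is nonnegative and jointly measurable, because $\{(\omega,x):X(\omega)>x\}$ is measurable in the product $\sigma$-algebra. Tonelli's theorem therefore allows the interchange without any further condition:
\[
E[(X-K)^+]=\int_K^\infty E\bigl[\1_{\{X>x\}}\bigr]\,\ud x=\int_K^\infty P(X>x)\,\ud x .
\]
The last expression equals $\int_K^\infty[1-F_X(x)]\,\ud x$ by the definition $F_X(x)=P(X\le x)$. This gives \eqref{eq:stop-loss-representation}.

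Integrability of $X$ enters only to ensure that the common value is finite. Since $0\le (X-K)^+\le |X|+|K|$, the left side is finite. Tonelli's theorem gives equality in $[0,\infty]$ in any case, so the right-hand integral is finite as well.

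The only step needing care is the justification of the interchange, and nonnegativity of the integrand disposes of it. I therefore expect no real obstacle.
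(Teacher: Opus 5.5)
Your proposal is correct and follows the same route as the paper: the pathwise identity $(X-K)^+=\int_K^\infty \1_{\{X>x\}}\,\ud x$ followed by Tonelli's theorem to exchange expectation and integral. Your remarks on joint measurability of the integrand and on finiteness via integrability are details the paper leaves implicit.
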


\begin{proof}
For every real number $x$,
\[
(x-K)^+
=
\int_K^\infty \mathbf{1}_{\{x>u\}}\,du.
\]
Substituting $x=X$ gives a nonnegative random integral. Tonelli's theorem therefore allows us to exchange the expectation and the integral:
\[
E[(X-K)^+]
=
\int_K^\infty E[\mathbf{1}_{\{X>u\}}]\,du
=
\int_K^\infty P(X>u)\,du.
\]
Since $F_X(u)=P(X\leq u)$, we have $P(X>u)=1-F_X(u)$, which proves \eqref{eq:stop-loss-representation}.
\end{proof}

Let
\[
C_B(K)=e^{-rT}E[(B(T)-K)^+]
\]
denote the true general-basket call price and let
\[
\widehat C_\pi(K)=e^{-rT}E[(\pi-K)^+]
\]
denote the price obtained from an admissible signed proxy $\pi$.

\begin{proposition}[Integrated CDF representation of the pricing error]
\label{prop:integrated-cdf-error}
Suppose that $B(T)$ and $\pi$ are integrable. Then
\begin{equation}
\label{eq:integrated-cdf-error}
C_B(K)-\widehat C_\pi(K)
=
e^{-rT}
\int_K^\infty
\left[
F_\pi(x)-F_B(x)
\right]dx,
\end{equation}
where $F_B$ and $F_\pi$ denote the distribution functions of $B(T)$ and $\pi$, respectively. Consequently,
\begin{equation}
\label{eq:tail-error-bound}
\left|C_B(K)-\widehat C_\pi(K)\right|
\leq
e^{-rT}
\int_K^\infty
\left|F_\pi(x)-F_B(x)\right|dx.
\end{equation}
\end{proposition}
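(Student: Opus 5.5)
The plan is to apply Lemma~\ref{lem:stop-loss-representation} separately to $X=B(T)$ and to $X=\pi$, then subtract. Both variables are integrable by hypothesis, so the lemma gives
\[
E[(B(T)-K)^+]=\int_K^\infty\bigl[1-F_B(x)\bigr]\,dx,
\qquad
E[(\pi-K)^+]=\int_K^\infty\bigl[1-F_\pi(x)\bigr]\,dx.
\]
Multiplying by $e^{-rT}$ turns these into $C_B(K)$ and $\widehat C_\pi(K)$.

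The one point that needs care is subtracting the two integrals. Each integrand is nonnegative, and each integral is finite: it equals $E[(X-K)^+]\le E|X|+|K|<\infty$. Both tail functions are therefore integrable on $[K,\infty)$, and linearity of the Lebesgue integral lets us combine them into a single integral. The integrand becomes
\[
\bigl[1-F_B(x)\bigr]-\bigl[1-F_\pi(x)\bigr]=F_\pi(x)-F_B(x),
\]
which gives \eqref{eq:integrated-cdf-error}. Without integrability this would risk an $\infty-\infty$ expression, and that is the only potential obstacle.

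For the bound \eqref{eq:tail-error-bound}, take absolute values of both sides of \eqref{eq:integrated-cdf-error}. Then use $\left|\int g\right|\le\int|g|$, valid because $g=F_\pi-F_B$ is integrable on $[K,\infty)$ as the difference of two integrable functions. The factor $e^{-rT}$ is positive and passes through unchanged.

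Nothing about admissibility of $\pi$ is used beyond its being a genuine integrable random variable. The proxy's distribution function is a bona fide CDF, by construction from the sign variable and a shifted lognormal, so $F_\pi$ is well defined.
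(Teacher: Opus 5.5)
Your proposal is correct and follows the paper's proof: apply Lemma~\ref{lem:stop-loss-representation} to $B(T)$ and to $\pi$, subtract the two tail integrals, and use the triangle inequality for the bound. Your explicit check that each tail integral is finite (bounded by $E|X|+|K|$), so that no $\infty-\infty$ arises when subtracting, is a detail the paper leaves implicit.
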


\begin{proof}
Applying Lemma~\ref{lem:stop-loss-representation} to $B(T)$ and $\pi$ gives
\[
C_B(K)
=
e^{-rT}\int_K^\infty [1-F_B(x)]\,dx
\]
and
\[
\widehat C_\pi(K)
=
e^{-rT}\int_K^\infty [1-F_\pi(x)]\,dx.
\]
Subtracting the second expression from the first yields
\[
C_B(K)-\widehat C_\pi(K)
=
e^{-rT}\int_K^\infty [F_\pi(x)-F_B(x)]\,dx,
\]
which is \eqref{eq:integrated-cdf-error}. The bound in \eqref{eq:tail-error-bound} follows from the triangle inequality.
\end{proof}

The proposition motivates the strike-dependent integrated tail discrepancy
\begin{equation}
\label{eq:integrated-tail-discrepancy}
\mathcal E_{\mathrm{tail}}(K)
=
\int_K^\infty
\left|F_\pi(x)-F_B(x)\right|dx.
\end{equation}
It follows that
\[
\left|C_B(K)-\widehat C_\pi(K)\right|
\leq
e^{-rT}\mathcal E_{\mathrm{tail}}(K).
\]
Unlike a density comparison over the whole support, $\mathcal E_{\mathrm{tail}}(K)$ focuses only on the part of the distribution that affects the call payoff at strike $K$.

\begin{corollary}[Direction of the pricing bias]
\label{cor:direction-pricing-bias}
Under the assumptions of Proposition~\ref{prop:integrated-cdf-error}:
\begin{enumerate}
\item if $F_\pi(x)\geq F_B(x)$ for all $x\geq K$, then $\widehat C_\pi(K)\leq C_B(K)$;
\item if $F_\pi(x)\leq F_B(x)$ for all $x\geq K$, then $\widehat C_\pi(K)\geq C_B(K)$.
\end{enumerate}
\end{corollary}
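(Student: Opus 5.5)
The corollary follows directly from the integrated CDF identity \eqref{eq:integrated-cdf-error} in Proposition~\ref{prop:integrated-cdf-error}. I would fix $K$ and write
\[
C_B(K)-\widehat C_\pi(K)=e^{-rT}\int_K^\infty\bigl[F_\pi(x)-F_B(x)\bigr]\,dx.
\]
The task is then to read off the sign of the right-hand side under each hypothesis.

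For part 1, suppose $F_\pi(x)\geq F_B(x)$ for all $x\geq K$. Then the integrand is pointwise nonnegative on $[K,\infty)$. Since $e^{-rT}>0$, the integral is nonnegative, which gives $C_B(K)-\widehat C_\pi(K)\geq0$, that is, $\widehat C_\pi(K)\leq C_B(K)$.

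Part 2 is the same argument with the inequality reversed: the integrand is nonpositive, so the difference is nonpositive and $\widehat C_\pi(K)\geq C_B(K)$.

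The only point that needs care is that the integral is well defined, so that integrating a signed integrand is legitimate. Here I would use the stop-loss representation of Lemma~\ref{lem:stop-loss-representation}. Integrability of $B(T)$ and $\pi$ makes $\int_K^\infty[1-F_B(x)]\,dx$ and $\int_K^\infty[1-F_\pi(x)]\,dx$ finite. Their difference is therefore the integral of an integrable function, and monotonicity of the Lebesgue integral applies. I do not expect any genuine obstacle; the corollary is a sign-reading consequence of the proposition.
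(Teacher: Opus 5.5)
Your proposal is correct and matches the paper's proof, which likewise reads off the sign of the integrand in \eqref{eq:integrated-cdf-error} under each hypothesis. Your added remark that integrability of $B(T)$ and $\pi$ makes $F_\pi-F_B$ integrable on $[K,\infty)$ supplies a justification that the paper leaves implicit.
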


\begin{proof}
The conclusions follow directly from the sign of the integrand in \eqref{eq:integrated-cdf-error}.
\end{proof}

The first four moments of $B(T)$ and $\pi$ are equal, but this does not force the integral in \eqref{eq:integrated-cdf-error} to be small. The following observation makes this point precise.

\begin{proposition}[Moment matching and signed CDF area]
\label{prop:matched-mean-cdf-area}
Suppose that $B(T)$ and $\pi$ are integrable and have the same mean. Then
\begin{equation}
\label{eq:matched-mean-cdf-area}
\int_{-\infty}^{\infty}
\left[
F_\pi(x)-F_B(x)
\right]dx
=
0.
\end{equation}
Therefore, positive and negative CDF discrepancies can cancel over the whole real line even when the discrepancy above a particular strike is economically important.
\end{proposition}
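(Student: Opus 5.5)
The plan is to use the classical representation of the mean of an integrable random variable as a difference of two tail integrals, and to apply it to both $B(T)$ and $\pi$. For an integrable $X$ with distribution function $F_X$, write $X=X^+-X^-$. Lemma~\ref{lem:stop-loss-representation} with $K=0$ gives
\[
E[X^+]=E[(X-0)^+]=\int_0^\infty[1-F_X(x)]\,dx .
\]
For the negative part I will use the same indicator argument: $x^-=\int_{-\infty}^0\mathbf 1_{\{x\le u\}}\,du$ and Tonelli's theorem. This yields
\[
E[X^-]=\int_{-\infty}^0F_X(u)\,du .
\]
Both integrals are finite because $X$ is integrable, and therefore
\[
E[X]=\int_0^\infty[1-F_X(x)]\,dx-\int_{-\infty}^0F_X(x)\,dx .
\]

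Next I apply this identity to $X=B(T)$ and to $X=\pi$ and subtract. On $[0,\infty)$ the integrands combine to $[1-F_B(x)]-[1-F_\pi(x)]=F_\pi(x)-F_B(x)$. On $(-\infty,0]$ they combine to $-F_B(x)+F_\pi(x)$. So
\[
E[B(T)]-E[\pi]=\int_{-\infty}^{\infty}[F_\pi(x)-F_B(x)]\,dx .
\]
Each piece is absolutely integrable, being a difference of two finite integrals, so splitting and recombining the integrals is legitimate.

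The equal-mean hypothesis makes the left side zero, which gives \eqref{eq:matched-mean-cdf-area}. The interpretive consequence follows by setting $K$ equal to the lower limit in Proposition~\ref{prop:integrated-cdf-error}. The area $\int_K^\infty(F_\pi-F_B)\,dx$ equals minus $\int_{-\infty}^K(F_\pi-F_B)\,dx$, and this quantity need not vanish. I would add a brief remark or toy example, such as two distinct distributions with matched first four moments, to show that the sign-changing integrand can cancel globally while leaving a nonzero tail contribution.

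The only delicate point is integrability. I must make sure the negative-side integral is not treated improperly, since the improper integral $\int_{-\infty}^\infty(F_\pi-F_B)$ would only be conditionally meaningful without the splitting argument. I will address this by working with the absolutely convergent half-line pieces separately, each finite by $E|X|<\infty$.
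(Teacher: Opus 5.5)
Your proposal is correct and follows the paper's proof: both write $E[X]=\int_0^\infty[1-F_X(x)]\,dx-\int_{-\infty}^0F_X(x)\,dx$ for $X=B(T)$ and $X=\pi$, subtract, and use the matched mean. You go further by deriving that identity from Lemma~\ref{lem:stop-loss-representation} and Tonelli's theorem, and by making explicit the absolute integrability of each half-line piece, which the paper leaves implicit.
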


\begin{proof}
For any integrable random variable $X$,
\[
E[X]
=
\int_0^\infty [1-F_X(x)]\,dx
-
\int_{-\infty}^{0}F_X(x)\,dx.
\]
Applying this identity to $B(T)$ and $\pi$ and subtracting gives
\[
E[B(T)]-E[\pi]
=
\int_{-\infty}^{\infty}
[F_\pi(x)-F_B(x)]\,dx.
\]
The left-hand side is zero because the first moment is matched.
\end{proof}

Proposition~\ref{prop:matched-mean-cdf-area} explains why matching the mean, variance, skewness, and kurtosis cannot replace a tail diagnostic. The moment conditions constrain the distribution globally, while an option price depends on a one-sided region determined by the strike.

\begin{proposition}[Local relation between CDF error and price error]
\label{prop:local-price-error}
Define
\[
\Delta_C(K)=C_B(K)-\widehat C_\pi(K).
\]
At every strike $K$ at which both $F_B$ and $F_\pi$ are continuous,
\begin{equation}
\label{eq:price-error-slope}
\Delta_C'(K)
=
e^{-rT}\bigl[F_B(K)-F_\pi(K)\bigr].
\end{equation}
If both distributions admit densities $f_B$ and $f_\pi$ that are continuous at $K$, then
\begin{equation}
\label{eq:price-error-curvature}
\Delta_C''(K)
=
e^{-rT}\bigl[f_B(K)-f_\pi(K)\bigr].
\end{equation}
\end{proposition}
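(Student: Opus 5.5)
We start from the integrated CDF representation in Proposition~\ref{prop:integrated-cdf-error}, which writes the error as
\[
\Delta_C(K)=e^{-rT}\int_K^\infty\bigl[F_\pi(x)-F_B(x)\bigr]\,dx .
\]
We then differentiate this expression in the lower limit. The integrand $g(x)=F_\pi(x)-F_B(x)$ is bounded and integrable on $[K,\infty)$. By Lemma~\ref{lem:stop-loss-representation}, each of $\int_K^\infty[1-F_B]$ and $\int_K^\infty[1-F_\pi]$ is finite, because both variables are integrable. So $\Delta_C(K)=e^{-rT}\bigl(c-\int_{K_0}^{K}g(x)\,dx\bigr)$ for a fixed reference point $K_0$ and a constant $c$. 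In other words, $\Delta_C$ is an indefinite integral of a bounded, locally integrable function.

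The key step is the fundamental theorem of calculus for Lebesgue integrals. If $g$ is continuous at $K$, then $\int_{K_0}^{K}g$ is differentiable at $K$ with derivative $g(K)$. By hypothesis both $F_B$ and $F_\pi$ are continuous at $K$, so $g$ is continuous there. This gives
\[
\Delta_C'(K)=-e^{-rT}g(K)=e^{-rT}\bigl[F_B(K)-F_\pi(K)\bigr],
\]
which is \eqref{eq:price-error-slope}. To be self-contained, we would write the difference quotient $\frac1h\int_K^{K+h}g(x)\,dx$ and bound $|g(x)-g(K)|$ by $\varepsilon$ for $|x-K|<\delta$. Alternatively, we can invoke item~5 of Proposition~\ref{prop:financial-properties} for $\widehat C_\pi$ and the same argument for $C_B$, then subtract.

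For the second derivative, assume densities $f_B$ and $f_\pi$ exist and are continuous at $K$. Then $F_B(x)=\int_{-\infty}^x f_B$ and $F_\pi(x)=\int_{-\infty}^x f_\pi$. Applying the same fundamental-theorem argument to these integrals gives $F_B'(K)=f_B(K)$ and $F_\pi'(K)=f_\pi(K)$.

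One subtlety is that \eqref{eq:price-error-slope} has so far been established only at continuity points. Since $B(T)$ and $\pi$ have densities in this case, both CDFs are continuous everywhere, so the first-derivative formula holds on all of $\mathbb R$. We may therefore differentiate it at $K$ to get
\[
\Delta_C''(K)=e^{-rT}\bigl[f_B(K)-f_\pi(K)\bigr],
\]
which is \eqref{eq:price-error-curvature}. For the signed proxy, the density exists away from the atomless points. The basket $B(T)$ has a density whenever some $\sigma_i>0$ and the weights are nonzero. These are modelling facts we simply assume, as the statement does.

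The only real obstacle is the justification of differentiating under the improper integral. This is handled by reducing to an indefinite integral with finite total mass via integrability, and by the pointwise continuity hypotheses. Everything else is routine.
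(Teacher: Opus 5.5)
Your proof is correct and follows the paper's route: start from the integrated-CDF identity of Proposition~\ref{prop:integrated-cdf-error}, apply the fundamental theorem of calculus at a continuity point, then differentiate once more. Your extra care fills in details the paper leaves implicit: the improper integral is finite, and the slope formula holds on a neighbourhood of $K$ because densities make both CDFs continuous everywhere. Your closing remarks on when densities exist are unnecessary, since the statement assumes them, and they are not accurate as written: the signed proxy with $s>0$ has a density everywhere, while $B(T)\equiv0$ for two perfectly correlated assets with equal volatilities and $w_1S_1(0)=-w_2S_2(0)$.
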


\begin{proof}
Equation~\eqref{eq:integrated-cdf-error} writes $\Delta_C(K)$ as an integral with lower limit $K$. At a continuity point of the integrand, the fundamental theorem of calculus gives
\[
\Delta_C'(K)
=
-e^{-rT}[F_\pi(K)-F_B(K)],
\]
which is \eqref{eq:price-error-slope}. Differentiating once more at a point where both densities are continuous gives \eqref{eq:price-error-curvature}.
\end{proof}

Equation~\eqref{eq:price-error-slope} shows that the CDF error at the strike determines the slope of the pricing error across strikes. Equation~\eqref{eq:price-error-curvature} shows that the density error determines the curvature of the pricing error. Thus, density plots are informative, but the CDF and integrated tail discrepancy are more directly connected to the option value.

\begin{corollary}[Global Wasserstein bound]
\label{cor:wasserstein-bound}
If $B(T)$ and $\pi$ have finite first moments, then
\begin{equation}
\label{eq:wasserstein-bound}
\left|C_B(K)-\widehat C_\pi(K)\right|
\leq
e^{-rT}
\int_{-\infty}^{\infty}
\left|F_\pi(x)-F_B(x)\right|dx.
\end{equation}
In one dimension, the integral on the right-hand side is the first Wasserstein distance between the two distributions.
\end{corollary}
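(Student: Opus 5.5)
The bound follows from the tail bound \eqref{eq:tail-error-bound} of Proposition~\ref{prop:integrated-cdf-error}, so the plan is to take that estimate and widen its integration domain. Under finite first moments, Proposition~\ref{prop:integrated-cdf-error} applies and gives
\[
\left|C_B(K)-\widehat C_\pi(K)\right|
\leq
e^{-rT}\int_K^\infty \left|F_\pi(x)-F_B(x)\right|dx .
\]
The integrand is nonnegative, so extending the lower limit from $K$ to $-\infty$ can only increase the integral. That yields \eqref{eq:wasserstein-bound} directly.

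The one point to verify is that the right-hand side is finite, so the bound is not vacuous. I would bound it by
\[
\int_{-\infty}^{\infty}|F_\pi-F_B|\,dx
\leq
\int_{-\infty}^{0}(F_\pi+F_B)\,dx+\int_0^\infty\bigl[(1-F_\pi)+(1-F_B)\bigr]dx .
\]
The identity used in the proof of Proposition~\ref{prop:matched-mean-cdf-area}, applied to $|X|$ and equivalently to the positive and negative parts of $X$, shows this sum equals $E|\pi|+E|B(T)|<\infty$. For the basket this integrability is automatic, since it is a finite sum of lognormals. For the signed proxy, $|\pi|=e^{sN+m}+|\tau|$-type terms are integrable.

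Finally, I would identify the integral with the first Wasserstein distance. This is the classical one-dimensional representation
\[
W_1(\mu,\nu)=\int_{\mathbb R}|F_\mu-F_\nu|\,dx=\int_0^1|F_\mu^{-1}(u)-F_\nu^{-1}(u)|\,du .
\]
I would cite it as standard rather than prove it, since it comes from the monotone (quantile) coupling being optimal on the line.

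None of these steps is a real obstacle. The only care needed is the finiteness check, together with noting that the inequality holds at every $K$, including strikes at which the distribution functions jump.
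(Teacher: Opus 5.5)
Your proposal is correct and follows the paper's proof: both take the tail estimate \eqref{eq:tail-error-bound} from Proposition~\ref{prop:integrated-cdf-error} and enlarge the integration domain from $[K,\infty)$ to $\mathbb{R}$, which is valid because the integrand is nonnegative. Your finiteness check via $E|\pi|+E|B(T)|$ and your citation of the quantile representation of $W_1$ are valid additions that the paper leaves implicit. One small correction: in the finiteness check, $|\pi|\le e^{sN+m}+|\tau|$ is an inequality, not an equality.
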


\begin{proof}
The result follows because the integral in \eqref{eq:tail-error-bound} is taken over the half-line $\{x\in\R:x\geq K\}$, which is a subset of the real line.
\end{proof}

\subsection{CDF-error decomposition for standard baskets}
\label{subsec:standard-basket-cdf-error}

The previous results apply directly to the signed proxy for a general basket. The standard-basket construction has a different structure: its price is a linear combination of several probabilities, and each probability is approximated separately.

For $i=1,\ldots,n+1$, define
\[
L_i
=
\sum_{j=1}^{n}
e^{N_j(m_{ij},\bar\sigma_j^2)}
\]
and let
\[
F_i(z)=P(L_i\leq z).
\]
Let $\widehat F_i$ denote the CDF obtained from the corresponding four-moment shifted-mixture approximation. Recall the notation $q_1=1$, $q_i=-1$ for $i=2,\ldots,n+1$, $s_i=|\omega_i|S_i(0)$ for $i\leq n$, and $s_{n+1}=Ke^{-rT}$.

\begin{proposition}[Exact decomposition of the standard-basket pricing error]
\label{prop:standard-basket-cdf-error}
The exact and approximate standard-basket prices can be written as
\[
C^{sd}(T)
=
\sum_{i=1}^{n+1}q_i s_iF_i(1)
\]
and
\[
\widehat C^{sd}(T)
=
\sum_{i=1}^{n+1}q_i s_i\widehat F_i(1).
\]
Therefore,
\begin{equation}
\label{eq:standard-basket-cdf-error}
C^{sd}(T)-\widehat C^{sd}(T)
=
\sum_{i=1}^{n+1}
q_i s_i
\left[
F_i(1)-\widehat F_i(1)
\right],
\end{equation}
and
\begin{equation}
\label{eq:standard-basket-cdf-bound}
\left|C^{sd}(T)-\widehat C^{sd}(T)\right|
\leq
\sum_{i=1}^{n+1}
s_i
\left|
F_i(1)-\widehat F_i(1)
\right|.
\end{equation}
\end{proposition}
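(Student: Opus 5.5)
The argument is essentially bookkeeping on top of Proposition~\ref{cor2.8}. I will first rewrite the exact representation \eqref{SB2} in the compact notation $q_i$, $s_i$. The coefficients in \eqref{SB2} are $s_1$ with a plus sign, $-s_i$ for $2\le i\le n$, and $-Ke^{-rT}=-s_{n+1}$. Since $q_1=1$ and $q_i=-1$ for $i\ge 2$, these coefficients are exactly $q_is_i$. The $i$th probability in \eqref{SB2} is $P(L_i\le 1)=F_i(1)$ by the definition of $L_i$, with the same parameters $m_{ij}$, $\bar\sigma_j$ and correlation matrix $\bar\Gamma$. This gives the first displayed identity $C^{sd}(T)=\sum_{i=1}^{n+1}q_is_iF_i(1)$.

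Next I will define the approximate price. In Section~\ref{subsec:four-moment-lognormal}, each $P(L_i\le 1)$ is replaced via \eqref{log-20}. One normalizes $L_i$ by $\sum_j e^{m_{ij}}$, fits $W$ through Proposition~\ref{32}, and evaluates with Lemma~\ref{lemma31}. By definition, $\widehat F_i(1)$ is this approximating probability. So $\widehat C^{sd}(T)$ is the same linear combination with $F_i(1)$ replaced by $\widehat F_i(1)$, which is the second identity. I will note that the coefficients $q_is_i$ are left untouched by the approximation. This is the one point to state carefully: the approximation enters only through the probabilities, as stressed at the start of Section~\ref{sec:standard-basket-framework}.

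Subtracting the two linear combinations term by term then gives \eqref{eq:standard-basket-cdf-error} immediately. For the bound \eqref{eq:standard-basket-cdf-bound}, I will apply the triangle inequality and use $|q_i|=1$ and $s_i\ge 0$. Here $s_i=|\omega_i|S_i(0)\ge0$ and $s_{n+1}=Ke^{-rT}\ge0$, assuming $K\ge0$ as implicit in the use of $\ln(Ke^{-rT})$. If one wants to allow arbitrary $K$, I would write $|s_{n+1}|$ instead.

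There is no real analytic obstacle. The only thing to check is that the indexing of $m_{ij}$ and the convention $\rho_{n+1,1}=0$, $\sigma_{n+1}=1$ make the $(n+1)$st term coincide with the $m_{n+1,j}$ formulas of Proposition~\ref{cor2.8}. This holds because $\rho_{n+1,j}\sigma_{n+1}\sigma_jT$ vanishes, so those terms drop out of the $m_{ij}$ expression.
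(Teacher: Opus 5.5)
Your proposal is correct and follows the paper's own proof: you read $C^{sd}(T)=\sum_{i=1}^{n+1}q_is_iF_i(1)$ off Proposition~\ref{cor2.8}, replace each $F_i(1)$ by $\widehat F_i(1)$, subtract the two linear combinations, and apply the triangle inequality with $|q_i|=1$. Your explicit remark that the bound also needs $s_i\ge 0$, in particular $s_{n+1}=Ke^{-rT}\ge 0$, spells out a step the paper leaves implicit.
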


\begin{proof}
Proposition~\ref{cor2.8} gives the exact representation
\[
C^{sd}(T)=\sum_{i=1}^{n+1}q_i s_iF_i(1).
\]
Replacing every $F_i$ with its four-moment approximation $\widehat F_i$ gives the approximate price. Subtracting the two formulas proves \eqref{eq:standard-basket-cdf-error}. Applying the triangle inequality and using $|q_i|=1$ gives \eqref{eq:standard-basket-cdf-bound}.
\end{proof}

The decomposition in \eqref{eq:standard-basket-cdf-error} identifies the contribution of each probability approximation to the final price. Define
\begin{equation}
\label{eq:standard-error-contribution}
\Delta_i
=
q_i s_i
\left[
F_i(1)-\widehat F_i(1)
\right].
\end{equation}
Then
\[
C^{sd}(T)-\widehat C^{sd}(T)
=
\sum_{i=1}^{n+1}\Delta_i.
\]
The individual errors $\Delta_i$ may have opposite signs and partly cancel. A small total price error therefore does not necessarily imply that every CDF approximation is accurate. For this reason, the numerical analysis reports both the total pricing error and the componentwise CDF errors at the threshold $1$ whenever the required simulation output is available.

\subsection{A tail-sensitive parameterization of the mixture}
\label{subsec:tail-sensitive-mixture}

The positive-sum approximation uses two variance states, $\alpha$ and $\beta$. Treating them as unspecified constants weakens reproducibility. A convenient parameterization is
\begin{equation}
\label{eq:symmetric-mixture-parameterization}
\alpha_\delta=1-\delta,
\qquad
\beta_\delta=1+\delta,
\qquad
0<\delta<1.
\end{equation}
The restriction $0<\delta<1$ keeps both variance states positive. For each fixed $\delta$, Proposition~\ref{32} determines $p$, $s$, $m$, and $\tau$, subject to the admissibility conditions in Definition~\ref{def:admissible-positive-sum}. Thus, $\delta$ controls the separation between the two conditional lognormal components without changing the analytical form of the CDF.

Let $\mathcal D_{\mathrm{adm}}\subset(0,1)$ denote the set of values of $\delta$ for which an admissible four-moment solution exists. Since the first four moments are already matched, a natural model-based criterion uses the standardized fifth central moment
\begin{equation}
\label{eq:fifth-standardized-moment}
\gamma_5(X)
=
\frac{E[(X-E[X])^5]}{\Var(X)^{5/2}}.
\end{equation}
One may select
\begin{equation}
\label{eq:delta-fifth-moment}
\delta^\star
\in
\arg\min_{\delta\in\mathcal D_{\mathrm{adm}}}
\left|
\gamma_5(\bar S)-\gamma_5(W_\delta)
\right|.
\end{equation}

This criterion uses information beyond kurtosis while preserving the analytical pricing step. It also requires only a one-dimensional search over $\delta$. A tail-based alternative is
\begin{equation}
\label{eq:delta-tail-criterion}
\delta^\star_{\mathrm{tail}}
\in
\arg\min_{\delta\in\mathcal D_{\mathrm{adm}}}
\sum_{\ell=1}^{L}\omega_\ell
\int_{k_\ell}^{\infty}
\left|
F_{W_\delta}(x)-F_{\bar S}(x)
\right|dx,
\end{equation}
where $\omega_\ell\geq0$ are strike weights. Because $F_{\bar S}$ is generally unavailable in closed form, \eqref{eq:delta-tail-criterion} requires a numerical benchmark and is best used as a robustness diagnostic rather than as the default analytical calibration rule.

\begin{remark}[Baseline and refined specifications]
\label{rem:baseline-refined-mixture}
The fixed-$(\alpha,\beta)$ and the optimized-$\delta$ constructions should not be mixed in the same numerical comparison. Results based on \eqref{eq:delta-fifth-moment} or \eqref{eq:delta-tail-criterion} require the moment equations and all option prices to be recomputed. The current pricing tables should therefore be interpreted as results for the baseline fixed-$(\alpha,\beta)$ specification unless the refined values of $\delta$ are reported explicitly.
\end{remark}

The theoretical results in this section guide the numerical study. Admissibility determines whether the fitted proxy is valid, the financial-properties results provide strike-based consistency checks, and Propositions~\ref{prop:integrated-cdf-error} and~\ref{prop:standard-basket-cdf-error} identify the distributional quantities that directly control the pricing error.

\section{Numerical Evaluation}
\label{sec:numerical-evaluation}

This section has two purposes. First, it preserves the six standard-basket
benchmarks used to evaluate the exact probability reformulation. Second, it
studies the signed general-basket proxy in an empirical crack-spread
application. Monte Carlo simulation is the numerical benchmark in both parts.
For an approximation $\widehat C_j$ and Monte Carlo value $C_j^{MC}$, we report
\[
\mathrm{MAE}=\frac1N\sum_{j=1}^{N}|\widehat C_j-C_j^{MC}|,
\qquad
\mathrm{RMSE}=
\left[\frac1N\sum_{j=1}^{N}(\widehat C_j-C_j^{MC})^2\right]^{1/2}.
\]
Monte Carlo uncertainty is measured by the standard error of the discounted
sample mean. Whenever CDF or integrated-tail diagnostics are reported, they use
the same simulated terminal basket values as the corresponding price. This
keeps the pricing and distributional comparisons internally consistent.

\subsection{Benchmark three-asset standard baskets}
\label{subsec:threeasset-standard-examples}

We retain the six three-asset examples, their input parameters, and their
weight vector exactly as in the original numerical study. The baseline uses
$T=1$, $r=0.03$, and the fixed variance states $\alpha=0.9$ and $\beta=1.1$.
These examples separate the gain from the exact probability reformulation from
the additional gain obtained by matching the fourth moment.

Each contract uses the weight vector
\[
\left(\frac{2}{3},-\frac{1}{3},-1\right),
\]
so it contains one positive component and two negative components, as required
by the standard-basket definition in
Section~\ref{sec:standard-basket-framework}. The time-zero call price is
\begin{equation}
\label{threeasset-payoff}
C_T^{\mathrm{3SB}}
=
e^{-rT}\E\!\left[
\left(
\frac{2}{3}S_T^1-\frac{1}{3}S_T^2-S_T^3-K
\right)^+
\right].
\end{equation}

Applying Proposition~\ref{cor2.8}, we obtain
\begin{equation}\label{threeasset-representation}
\begin{split}
C_T^{\mathrm{3SB}}&=s_1P(\sum_{j=1}^3e^{N_j(m_{1j},\bar{\sigma}_j^2)}\le1)-s_2P(\sum_{j=1}^3e^{N_j(m_{2j},\bar{\sigma}_j^2)}\le1)\\
&-s_3P(\sum_{j=1}^3e^{N_j(m_{3j},\bar{\sigma}_j^2)}\le1)-Ke^{-rT}P(\sum_{j=1}^3e^{N_j(m_{4j},\bar{\sigma}_j^2)}\le1),
\end{split}
\end{equation}
where $s_1=\frac{2}{3}S_0^1$, $s_2=\frac{1}{3}S_0^2$, $s_3=S_0^3$, and
\begin{equation}
\begin{split}
\bar{\sigma}_1=&\sigma_1\sqrt{T},\quad \bar{\sigma}_2=\sqrt{\sigma_2^2+\sigma_1^2-2\sigma_1\sigma_2\rho_{12}}\sqrt{T}, \quad \bar{\sigma}_3=\sqrt{\sigma_3^2+\sigma_1^2-2\sigma_1\sigma_3\rho_{13}}\sqrt{T},\\
m_{11}=&\ln (Ke^{-rT})-\ln s_1-\frac{1}{2}\sigma_1^2T,\\
m_{21}=&\ln (Ke^{-rT})-\ln s_1+\frac{1}{2}\sigma_1^2T-\rho_{12}\sigma_2\sigma_1T,\\
m_{31}=&\ln (Ke^{-rT})-\ln s_1+\frac{1}{2}\sigma_1^2T-\rho_{13}\sigma_3\sigma_1T,\\
m_{41}=&\ln(Ke^{-rT})-\ln s_1+\frac{1}{2}\sigma_1^2T,\\
m_{12}=&\ln s_2-\ln s_1-\frac{1}{2}\sigma_1^2T-\frac{1}{2}\sigma_2^2T+\rho_{12}\sigma_1\sigma_2T,\\
m_{22}=&\ln s_2-\ln s_1+\frac{1}{2}\sigma_1^2T+\frac{1}{2}\sigma_2^2T-\rho_{12}\sigma_1\sigma_2T,\\
m_{32}=&\ln s_2-\ln s_1+\frac{1}{2}\sigma_1^2T-\frac{1}{2}\sigma_2^2T+\rho_{23}\sigma_2\sigma_3T-\rho_{13}\sigma_1\sigma_3T,\\
m_{42}=&\ln s_2-\ln s_1+\frac{1}{2}\sigma_1^2T-\frac{1}{2}\sigma_2^2T,\\
m_{13}=&\ln s_3-\ln s_1-\frac{1}{2}\sigma_1^2T-\frac{1}{2}\sigma_3^2T+\rho_{13}\sigma_1\sigma_3T,\\
m_{23}=&\ln s_3-\ln s_1+\frac{1}{2}\sigma_1^2T-\frac{1}{2}\sigma_3^2T+\rho_{23}\sigma_2\sigma_3T-\rho_{12}\sigma_1\sigma_2T,\\
m_{33}=&\ln s_3-\ln s_1+\frac{1}{2}\sigma_1^2T+\frac{1}{2}\sigma_3^2T-\rho_{13}\sigma_1\sigma_3T,\\
m_{43}=&\ln s_3-\ln s_1+\frac{1}{2}\sigma_1^2T-\frac{1}{2}\sigma_3^2T,
\end{split}
\end{equation}
and the correlation matrix $\bar{\Gamma}=(\theta_{jk})$ of the normal random variables $(N_1, N_2, N_3)$ is given by
\begin{equation}
\label{eq:threeasset-correlation}
\begin{aligned}
\theta_{11}&=\theta_{22}=\theta_{33}=1,\\
\theta_{12}=\theta_{21}
&=\frac{\sigma_1-\sigma_2\rho_{12}}
{\sqrt{\sigma_1^2+\sigma_2^2-2\sigma_1\sigma_2\rho_{12}}},\\
\theta_{13}=\theta_{31}
&=\frac{\sigma_1-\sigma_3\rho_{13}}
{\sqrt{\sigma_1^2+\sigma_3^2-2\sigma_1\sigma_3\rho_{13}}},\\
\theta_{23}=\theta_{32}
&=\frac{\sigma_2\sigma_3\rho_{23}
-\sigma_1\sigma_2\rho_{12}
-\sigma_1\sigma_3\rho_{13}+\sigma_1^2}
{\sqrt{\sigma_2^2+\sigma_1^2-2\sigma_1\sigma_2\rho_{12}}
 \sqrt{\sigma_3^2+\sigma_1^2-2\sigma_1\sigma_3\rho_{13}}}.
\end{aligned}
\end{equation}

For $i=1,2,3,4$, define the normalized lognormal sum
\[
\bar S_i
=
\frac{1}{a_i}
\sum_{j=1}^{3}
\exp\!\left(N_j(m_{ij},\bar\sigma_j^2)\right),
\qquad
a_i=\sum_{j=1}^{3}e^{m_{ij}},
\]
and let
\[
W^{(i)}=e^{s^{(i)}\sqrt{Y^{(i)}}N+m^{(i)}}+\tau^{(i)}
\]
denote its four-moment approximation. Since
$\sum_{j=1}^{3}\exp(N_j(m_{ij},\bar\sigma_j^2))=a_i\bar S_i$,
the threshold $1$ in \eqref{threeasset-representation} becomes
\[
z_i=a_i^{-1},
\qquad i=1,2,3,4.
\]
The distribution of $Y^{(i)}$ is
$\PP(Y^{(i)}=\alpha)=p^{(i)}$ and
$\PP(Y^{(i)}=\beta)=1-p^{(i)}$. Define
\begin{equation}
\label{eq:standard-proxy-cdf-term}
\widehat F_i(1)=
\1_{\{z_i>\tau^{(i)}\}}
\left[
 p^{(i)}\Phi\!\left(
 \frac{\ln(z_i-\tau^{(i)})-m^{(i)}}{s^{(i)}\sqrt{\alpha}}
 \right)
 +(1-p^{(i)})\Phi\!\left(
 \frac{\ln(z_i-\tau^{(i)})-m^{(i)}}{s^{(i)}\sqrt{\beta}}
 \right)
\right].
\end{equation}

\begin{theorem}[Three-asset standard-basket approximation]
\label{thm:threeasset-approximation}
The option price in \eqref{threeasset-representation} is approximated by
\begin{equation}
\label{approx_price}
\widehat C_T^{\mathrm{3SB}}
=
s_1\widehat F_1(1)
-s_2\widehat F_2(1)
-s_3\widehat F_3(1)
-Ke^{-rT}\widehat F_4(1),
\end{equation}
where each $\widehat F_i(1)$ is given by
\eqref{eq:standard-proxy-cdf-term}.
\end{theorem}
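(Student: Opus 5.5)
The approximation in \eqref{approx_price} is an assembly of earlier results, so the plan is to verify that each piece applies to the three-asset case with $n=3$. First I would specialise Proposition~\ref{cor2.8} to the weight vector $(\tfrac23,-\tfrac13,-1)$. Then $\omega_1>0$ and $\omega_2,\omega_3<0$, with $s_1=\tfrac23S_0^1$, $s_2=\tfrac13S_0^2$, $s_3=S_0^3$ and $s_4=Ke^{-rT}$. This gives \eqref{threeasset-representation} exactly.

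I would check the listed $m_{ij}$ against the general formulas. The diagonal entries need care, because the correlation factor becomes $\rho_{ii}=1$. For example, $m_{11}=\ln(Ke^{-rT})-\ln s_1+\tfrac12\sigma_1^2T-\sigma_1^2T$ reduces to the stated $-\tfrac12\sigma_1^2T$ form. The index $i=4$ plays the role of $n+1$, with $\rho_{4j}=0$. The correlations \eqref{eq:threeasset-correlation} are the case $n=3$ of \eqref{kl}.

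Second, for each $i=1,\dots,4$ I would factor the sum as $\sum_j e^{N_j(m_{ij},\bar\sigma_j^2)}=a_i\bar S_i$ with $a_i=\sum_j e^{m_{ij}}>0$. This gives the exact identity $F_i(1)=\PP(\bar S_i\le z_i)$ with $z_i=a_i^{-1}$, which is the normalisation in \eqref{log-20}. Here $\bar S_i$ has weights $p_j=e^{m_{ij}}/a_i$ and volatilities $\nu_j=\bar\sigma_j$. Its first four moments come from the displayed moment formulas of Section~\ref{subsec:four-moment-lognormal}, with correlations $\theta_{jk}$ in place of $\rho_{jk}$.

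Third, Proposition~\ref{32} applied to each $\bar S_i$ produces parameters $(p^{(i)},s^{(i)},m^{(i)},\tau^{(i)})$, assumed admissible in the sense of Definition~\ref{def:admissible-positive-sum}. Lemma~\ref{lemma31} then evaluates $\PP(W^{(i)}\le z_i)$. The indicator $\1_{\{z_i>\tau^{(i)}\}}$ in \eqref{eq:standard-proxy-cdf-term} encodes the case $z_i\le\tau^{(i)}$, where the CDF vanishes. Proposition~\ref{prop:valid-mixture-cdf} guarantees that each $\widehat F_i(1)\in[0,1]$ is a genuine probability. Replacing each $F_i(1)$ by $\widehat F_i(1)$ in \eqref{threeasset-representation} gives \eqref{approx_price}. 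The resulting error is the one described in Proposition~\ref{prop:standard-basket-cdf-error}.

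The statement only asserts an approximation, so there is no inequality to prove. The main obstacle is bookkeeping: confirming the sign conventions and the $\rho_{ii}=1$ simplifications in the twelve $m_{ij}$. I would check $m_{22}$ and $m_{33}$ in particular, where the general formula gives $+\tfrac12\sigma_j^2T$ after cancellation. I would also confirm that the $N_j$ appearing for different $i$ share the same correlation matrix $\bar\Gamma$, with only the means shifting. This holds because the change of measure in Lemma~\ref{lem:correlated-gaussian-shift} shifts means only.
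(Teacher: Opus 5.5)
Your proposal is correct and takes the paper's route: write each probability as $\PP(a_i\bar S_i\le 1)=\PP(\bar S_i\le z_i)$, replace $\bar S_i$ by its fitted four-moment proxy $W^{(i)}$, evaluate that CDF with Lemma~\ref{lemma31}, and substitute into \eqref{threeasset-representation}. Your extra checks go beyond the paper's proof, which takes \eqref{threeasset-representation} as given, but they are consistent with it: the $\rho_{ii}=1$ simplifications in the $m_{ij}$, the $n=3$ case of \eqref{kl}, and the fact that the change of measure shifts only the means.
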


\begin{proof}
For each $i$, the exact probability is
$\PP(a_i\bar S_i\leq1)=\PP(\bar S_i\leq z_i)$. Replace $\bar S_i$ by
$W^{(i)}$ and apply Lemma~\ref{lemma31}. This gives
\eqref{eq:standard-proxy-cdf-term}. Substitution into
\eqref{threeasset-representation} yields \eqref{approx_price}.
\end{proof}

\begin{table}[htbp]
\centering
\begin{threeparttable}
\caption{Input parameters for the three-asset standard baskets}
\label{tab:threeasset-inputs}
\scriptsize
\setlength{\tabcolsep}{1.5pt}
\renewcommand{\arraystretch}{1.14}
\begin{tabular}{>{\raggedright\arraybackslash}p{2.15cm}*{6}{>{\centering\arraybackslash}p{1.95cm}}}
\toprule
& \textbf{Basket 1} & \textbf{Basket 2} & \textbf{Basket 3} &
\textbf{Basket 4} & \textbf{Basket 5} & \textbf{Basket 6}\\
\midrule
Initial values $S_i(0)$
& $[150,90,60]$ & $[120,60,50]$ & $[210,120,70]$
& $[180,80,60]$ & $[200,80,60]$ & $[140,70,55]$\\
\rowrule
Volatilities $\sigma_i$
& {\tiny $[0.20,0.30,0.25]$} & {\tiny $[0.10,0.20,0.30]$} & {\tiny $[0.10,0.10,0.20]$}
& {\tiny $[0.10,0.15,0.30]$} & {\tiny $[0.15,0.20,0.20]$} & {\tiny $[0.10,0.15,0.20]$}\\
\rowrule
Correlations
& \makecell{$\rho_{12}=0.9$\\$\rho_{13}=0.8$\\$\rho_{23}=0.9$}
& \makecell{$\rho_{12}=0.5$\\$\rho_{13}=0.6$\\$\rho_{23}=0.7$}
& \makecell{$\rho_{12}=0.1$\\$\rho_{13}=0.2$\\$\rho_{23}=0.3$}
& \makecell{$\rho_{12}=0.7$\\$\rho_{13}=0.8$\\$\rho_{23}=0.9$}
& \makecell{$\rho_{12}=0.5$\\$\rho_{13}=0.7$\\$\rho_{23}=0.9$}
& \makecell{$\rho_{12}=0.6$\\$\rho_{13}=0.7$\\$\rho_{23}=0.9$}\\
\rowrule
Strike $K$ & $10$ & $15$ & $20$ & $18$ & $40$ & $12$\\
\bottomrule
\end{tabular}
\begin{tablenotes}\footnotesize
\item All examples use $T=1$, $r=0.03$, and the weight vector
$(2/3,-1/3,-1)$.
\end{tablenotes}
\end{threeparttable}
\end{table}

\begin{table}[htbp]
\centering
\begin{threeparttable}
\caption{Three-asset standard-basket option prices}
\label{tab:threeasset-prices}
\scriptsize
\setlength{\tabcolsep}{4.3pt}
\renewcommand{\arraystretch}{1.14}
\begin{tabular}{>{\raggedright\arraybackslash}p{5.0cm}*{6}{>{\centering\arraybackslash}p{1.46cm}}}
\toprule
\textbf{Method} & \textbf{Basket 1} & \textbf{Basket 2} & \textbf{Basket 3}
& \textbf{Basket 4} & \textbf{Basket 5} & \textbf{Basket 6}\\
\midrule
Direct three-moment match
& 4.4766 & 3.0452 & 12.4475 & 15.4661 & 8.8169 & 5.3425\\
\rowrule
Three-moment match after the probability reformulation
& 4.3709 & 3.0210 & 12.3925 & 15.4427 & 8.7566 & 5.3132\\
\rowrule
Four-moment probability method, Theorem~\ref{thm:threeasset-approximation}
& 4.3740 & 3.0237 & 12.4093 & 15.4491 & 8.7628 & 5.3158\\
\rowrule
Monte Carlo benchmark
& \makecell{4.3738\\(0.0020)}
& \makecell{3.0263\\(0.0017)}
& \makecell{12.4077\\(0.0042)}
& \makecell{15.4472\\(0.0035)}
& \makecell{8.7620\\(0.0035)}
& \makecell{5.3155\\(0.0020)}\\
\bottomrule
\end{tabular}
\begin{tablenotes}\footnotesize
\item Monte Carlo standard errors are shown in parentheses. The direct and
probability-based three-moment rows use the shifted lognormal method of
\cite{hu2023pricingbasketoptionsmoments}.
\end{tablenotes}
\end{threeparttable}
\end{table}

Relative to Monte Carlo, the mean absolute error is $0.0437$ for the direct
three-moment approximation, $0.0059$ for the three-moment approximation applied
after the probability reformulation, and $0.0012$ for the proposed four-moment
method. The corresponding maximum absolute errors are $0.1028$, $0.0152$, and
$0.0026$. These results identify two distinct gains. The exact probability
reformulation produces the largest improvement, and matching the fourth moment
then reduces the remaining error further. The gain therefore does not arise
only from adding one moment; it also depends on approximating the positive
lognormal sums that appear in the exact pricing identity.

\subsubsection{Distributional assessment of the four probability terms}

\label{subsubsec:standard-distribution-diagnostics}

Table~\ref{tab:threeasset-prices} evaluates the accuracy of the final option
price. To understand why the probability-based method performs well, it is
also useful to examine the four distributions that enter the exact pricing
representation. For Basket~1, define
\[
L_i=\sum_{j=1}^{3}\exp\!\left(N_j(m_{ij},\bar\sigma_j^2)\right),
\qquad i=1,2,3,4,
\]
where the first three indices correspond to the three asset contributions and
the fourth corresponds to the discounted strike contribution. The standard
basket price depends on these variables through
\begin{equation}
\label{eq:basket1-four-cdf-terms}
C^{\mathrm{3SB}}(T)
=
\sum_{i=1}^{4}q_i s_i F_i(1),
\qquad
F_i(x)=P(L_i\leq x).
\end{equation}
Thus, the four panels in Figure~\ref{fig:standard-basket1-density-comparison}
are not four different basket contracts. They are the four probability terms
needed to assemble the price of the same contract.

\begin{figure}[htbp]
    \centering
    \subfloat[Positive-asset probability]{%
        \includegraphics[width=0.48\textwidth]{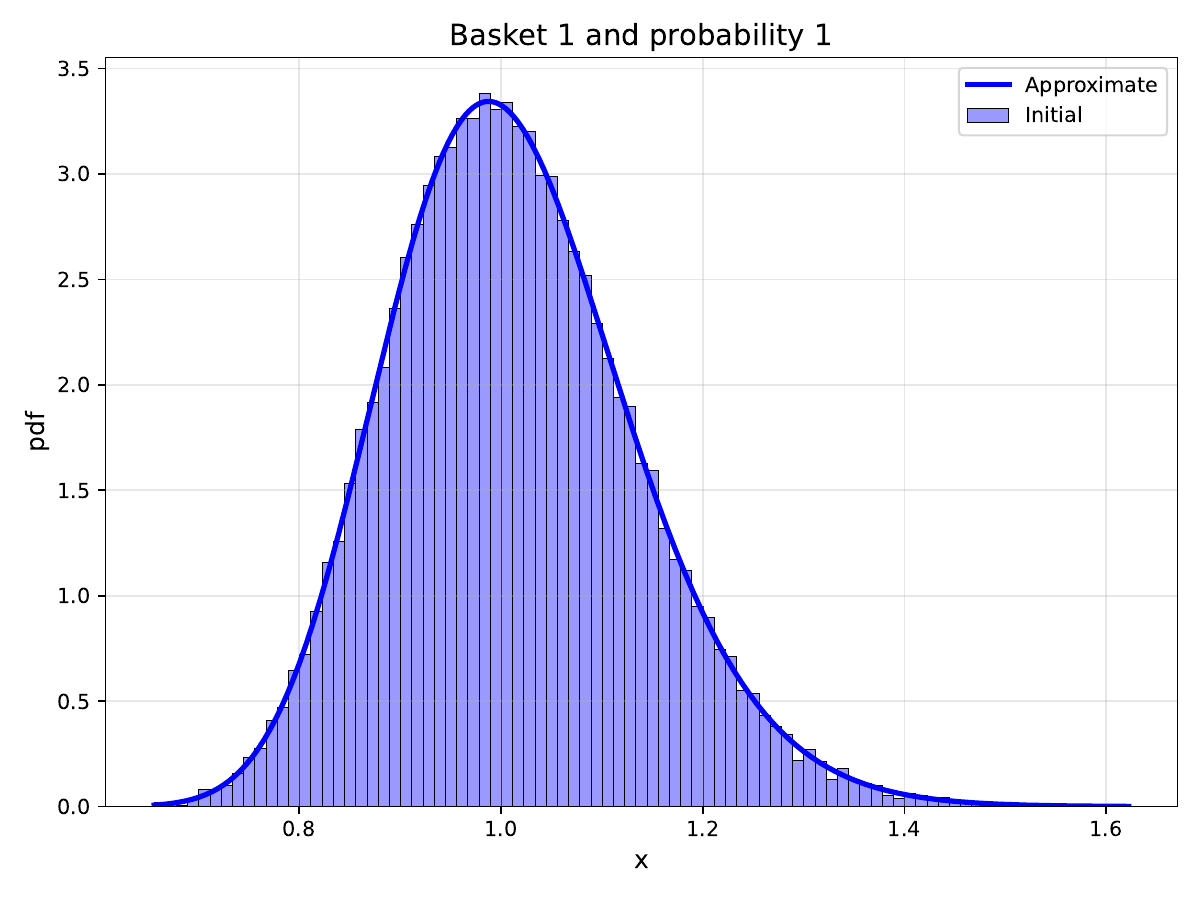}}
    \hfill
    \subfloat[First negative-asset probability]{%
        \includegraphics[width=0.48\textwidth]{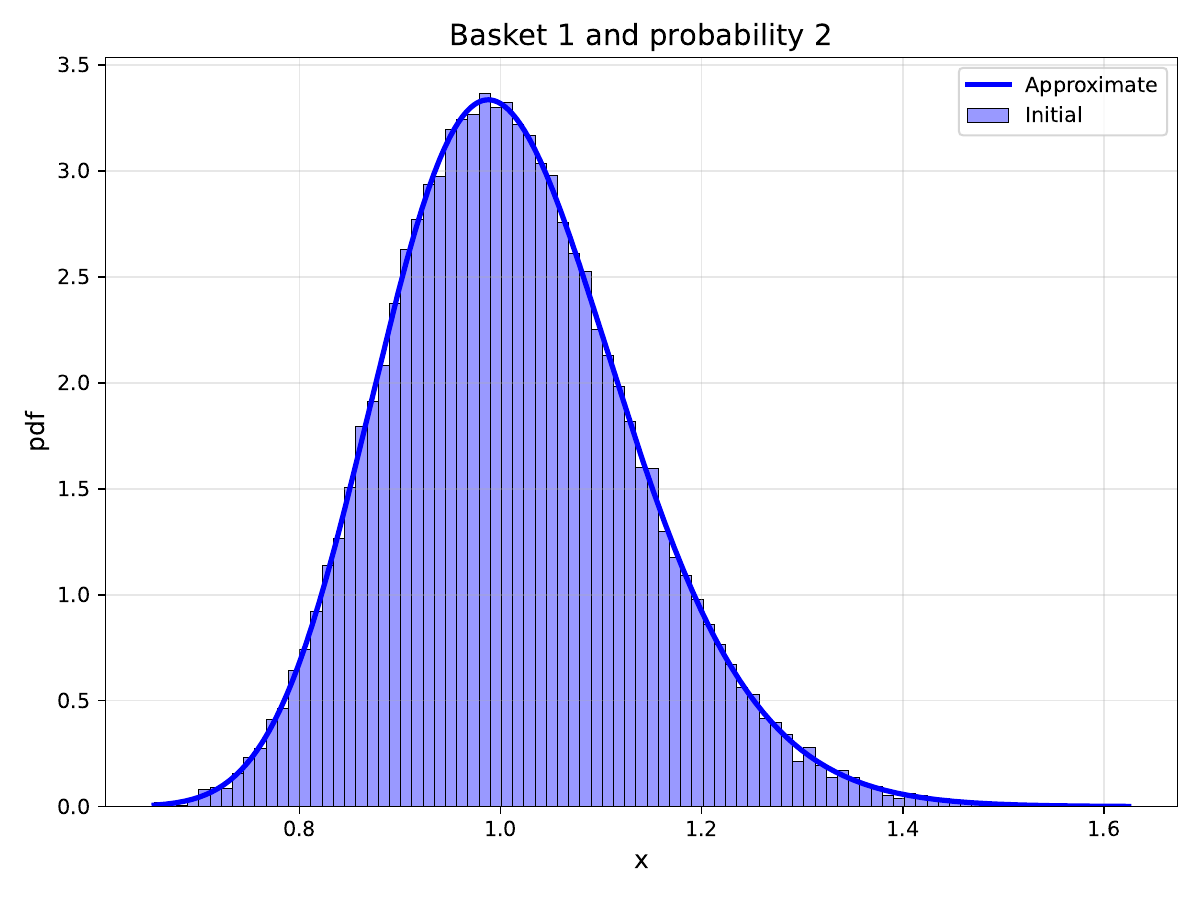}}

    \vspace{0.35cm}

    \subfloat[Second negative-asset probability]{%
        \includegraphics[width=0.48\textwidth]{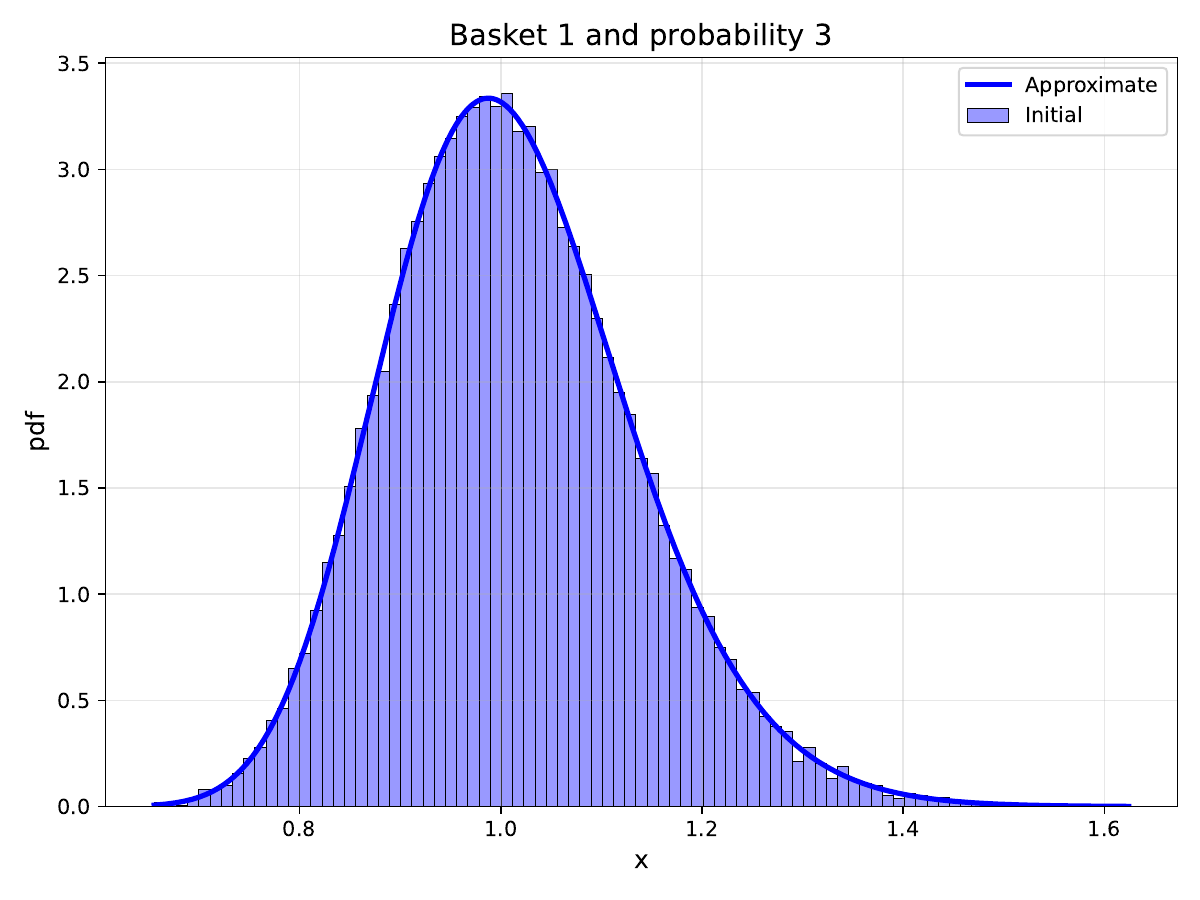}}
    \hfill
    \subfloat[Discounted-strike probability]{%
        \includegraphics[width=0.48\textwidth]{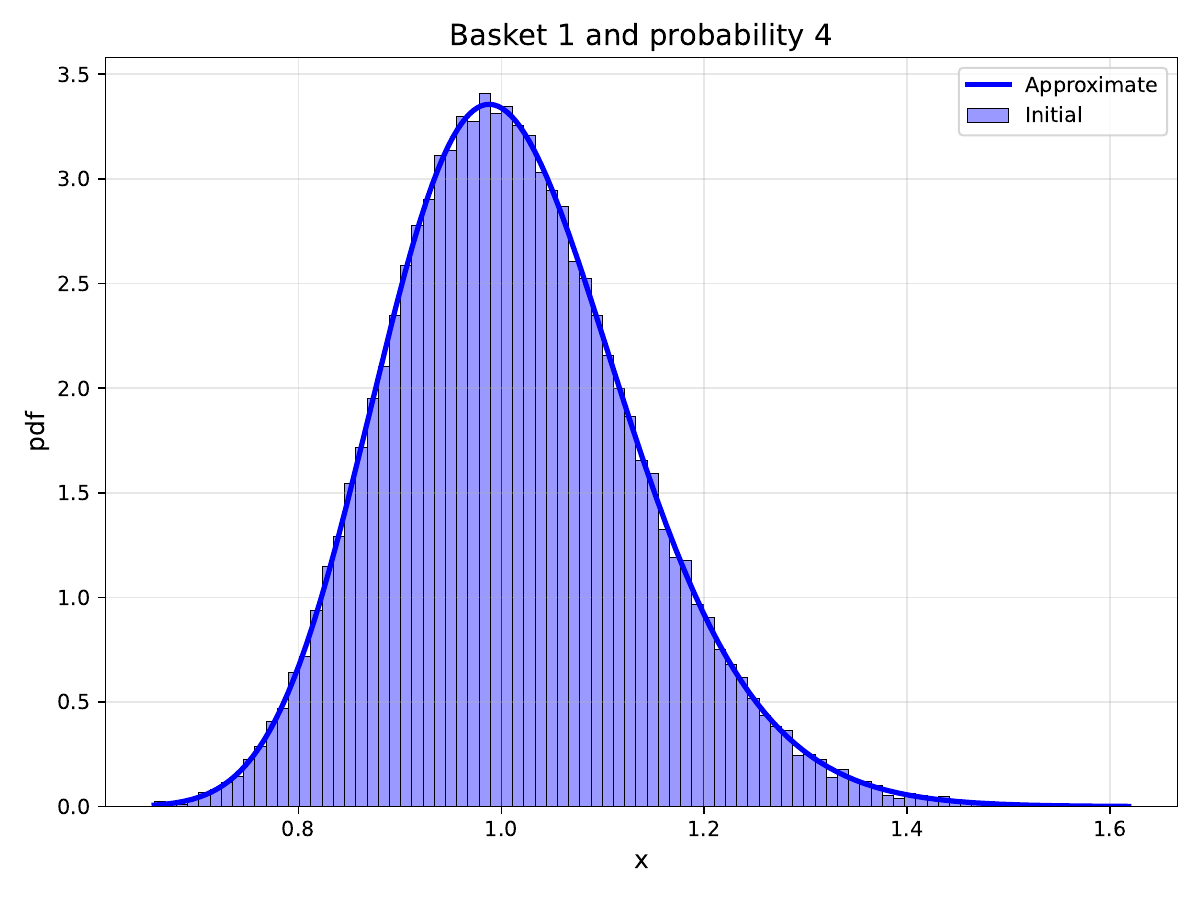}}

    \caption{Distributional assessment of the four correlated lognormal sums
    entering the price of Standard Basket~1. In each panel, the curve labeled
    ``approximate'' represents the benchmark density, and the shaded initial
    represents the fitted four-moment proxy. The option price uses the four CDF
    values at the common threshold $x=1$, as shown in
    \eqref{eq:basket1-four-cdf-terms}.}
    \label{fig:standard-basket1-density-comparison}
\end{figure}

The fitted distributions reproduce the location, dispersion, and right-skewed
shape of all four benchmark distributions closely. The visual agreement is
strong around $x=1$, the threshold that enters the option price. This supports
the probability-level construction and helps explain the small four-moment
error in Table~\ref{tab:threeasset-prices}. The four panels also indicate that
the final result is not driven by an accurate fit to only one term: each of the
four probability components is represented well over the region surrounding
the pricing threshold.

The figure should nevertheless be interpreted as a diagnostic, not as a
stand-alone proof of price accuracy. A density comparison describes the fit
over the displayed support, whereas Proposition~\ref{prop:standard-basket-cdf-error}
shows that the price error is determined by the signed and weighted CDF errors
at exactly $x=1$:
\[
C^{sd}(T)-\widehat C^{sd}(T)
=
\sum_{i=1}^{4}q_i s_i\bigl[F_i(1)-\widehat F_i(1)\bigr].
\]
Small errors in different terms may partly offset each other. For this reason,
the price table and the CDF-error decomposition remain the primary valuation
checks, while Figure~\ref{fig:standard-basket1-density-comparison} provides a
transparent distributional validation of the fitted proxies.

The role of this figure differs from the distributional diagnostics used later
for the empirical crack spread. A standard basket is priced from four separate
positive lognormal sums, so four fitted distributions must be examined. The
crack spread is treated directly as one mixed-sign basket and is represented by
one signed proxy; its diagnostics therefore compare the three- and four-moment
densities, CDFs, and upper-tail discrepancies for a single basket distribution.

\FloatBarrier

\subsection{Empirical general-basket application: the \texorpdfstring{$3{:}2{:}1$}{3:2:1} crack spread}
\label{subsec:crack-application}

The second experiment applies the signed proxy to an economically meaningful
mixed-sign basket. A long $3{:}2{:}1$ crack spread consists of two long RBOB gasoline
positions, one long ULSD position, and three short WTI crude-oil positions. RBOB and
ULSD futures are quoted in dollars per gallon, while WTI is quoted in dollars
per barrel. Since one barrel contains 42 gallons, the normalized spread value
in dollars per barrel is
\begin{equation}
\label{eq:crack-spread}
B_t^{\mathrm{cr}}
=
\frac{2}{3}\left(42F_t^{R}\right)
+
\frac{1}{3}\left(42F_t^{H}\right)
-
F_t^{C},
\end{equation}
where $F^R$, $F^H$, and $F^C$ denote RBOB, ULSD, and WTI futures,
respectively. This is the per-barrel version of the usual $3{:}2{:}1$
contract ratio \cite{cme2024crack}. Its weight vector is
\[
\left(\frac23,\frac13,-1\right),
\]
so it is a general basket rather than a standard basket under the definition
used in Section~\ref{sec:standard-basket-framework}.

\subsubsection{Data and parameter estimation}

Daily continuous futures series for the Yahoo Finance symbols \texttt{RB=F},
\texttt{HO=F}, and \texttt{CL=F} are used from January 2015 through July 2026
\cite{yahoo2026energy}. The three series are aligned by trading date. RBOB and
ULSD prices are multiplied by 42 before the basket is formed. Annualized
volatilities are estimated from daily log returns using 252 trading days, and
the dependence matrix is the sample correlation matrix of the aligned
returns.

The continuous WTI series contains a nonpositive observation during the April
2020 market disruption. A lognormal model and a log return are not defined at
a nonpositive price. We therefore remove returns that touch that observation.
No replacement or winsorization is used. This treatment is transparent, but
it also points to an important limitation: continuous front-month histories
contain roll effects and do not correspond to one fixed contract tenor.
Accordingly, the exercise below is a model-based study of the approximation,
not a replication of an exchange-traded option with matched futures and option
expiries.

Table~\ref{tab:crack-calibration} reports the fitted inputs. The latest aligned
spread value is
\[
B_0^{\mathrm{cr}}=63.2358,
\]
which is used as the baseline strike. The estimated correlations are positive and economically meaningful, but
they remain below one. The basket therefore retains substantial residual
uncertainty after the long refined-product positions are offset against crude
oil.

\begin{table}[htbp]
\centering
\begin{threeparttable}
\caption{Empirical inputs for the normalized $3{:}2{:}1$ crack-spread basket}
\label{tab:crack-calibration}
\renewcommand{\arraystretch}{1.14}
\begin{tabular}{lccc}
\toprule
Asset & Current level (\$/bbl) & Basket weight & Annualized volatility\\
\midrule
RBOB gasoline & 135.3072 & $2/3$ & 0.4489\\
\rowrule
ULSD          & 173.1030 & $1/3$ & 0.4028\\
\rowrule
WTI crude oil &  84.6700 & $-1$  & 0.4656\\
\bottomrule
\end{tabular}
\begin{tablenotes}\footnotesize
\item Pairwise correlations are
$\rho_{R,H}=0.6533$, $\rho_{R,C}=0.6877$, and
$\rho_{H,C}=0.7311$. Refined-product levels are converted from dollars per
gallon to dollars per barrel before estimation of the basket value.
\end{tablenotes}
\end{threeparttable}
\end{table}

Historical volatility and correlation estimates are used only as inputs to the
lognormal pricing experiment. They are not option-implied risk-neutral
parameters. All analytical prices and Monte Carlo benchmarks below are
therefore generated under the same fitted model. The comparison isolates the
error caused by the moment approximation.

\subsubsection{Baseline price comparison}

The baseline uses $T=1$, $r=0.03$, and $K=B_0^{\mathrm{cr}}$. Monte Carlo uses
$10^6$ paths. The reported uncertainty is the standard error of the discounted
sample mean. Table~\ref{tab:crack-baseline} shows that the four-moment method is
substantially closer to Monte Carlo than the three-moment method at the
baseline strike.

\begin{table}[htbp]
\centering
\caption{Baseline one-year crack-spread option prices}
\label{tab:crack-baseline}
\renewcommand{\arraystretch}{1.14}
\begin{tabular}{lcccc}
\toprule
Method & Option price & Absolute error & Error/MC s.e. & Fitted $p$\\
\midrule
Three moments & 15.1945 & 0.3752 & 13.64 & 1.0000\\
\rowrule
Four moments  & 14.8775 & 0.0583 &  2.12 & 0.9961\\
\rowrule
Monte Carlo   & \makecell{14.8192\\(0.0275)} & -- & -- & --\\
\bottomrule
\end{tabular}
\end{table}

Matching the fourth moment lowers the baseline absolute error by
approximately
\[
1-\frac{0.0583}{0.3752}=84.5\%.
\]
The remaining difference is about 2.12 Monte Carlo standard errors. It is
therefore more accurate to say that the four-moment estimate is substantially
closer to the benchmark, rather than statistically indistinguishable from it.
The fitted value $p=0.9961$ is close to the one-sided boundary. This indicates
that the positive branch dominates the fitted distribution, while the small
negative-sign probability is still needed to reproduce the fourth moment.

\subsubsection{Distributional fit at the baseline maturity}
\label{subsubsec:crack-distributional-fit}

Table~\ref{tab:crack-baseline} summarizes the pricing error, but it does not
show where the approximation differs from the simulated basket distribution.
We therefore examine the distributional fit at the one-year baseline. All
four panels in Figure~\ref{fig:crack-distributional-diagnostics} use the same
$10^6$ terminal basket values as the Monte Carlo price in
Table~\ref{tab:crack-baseline}. The analytical curves are computed from the
three- and four-moment parameters fitted to the same target moments. Thus, the
figure introduces no additional simulation sample and is directly comparable
with the baseline pricing results.

\begin{figure}[htbp]
\centering
\subfloat[Three-moment density fit]{%
\includegraphics[width=0.48\textwidth]{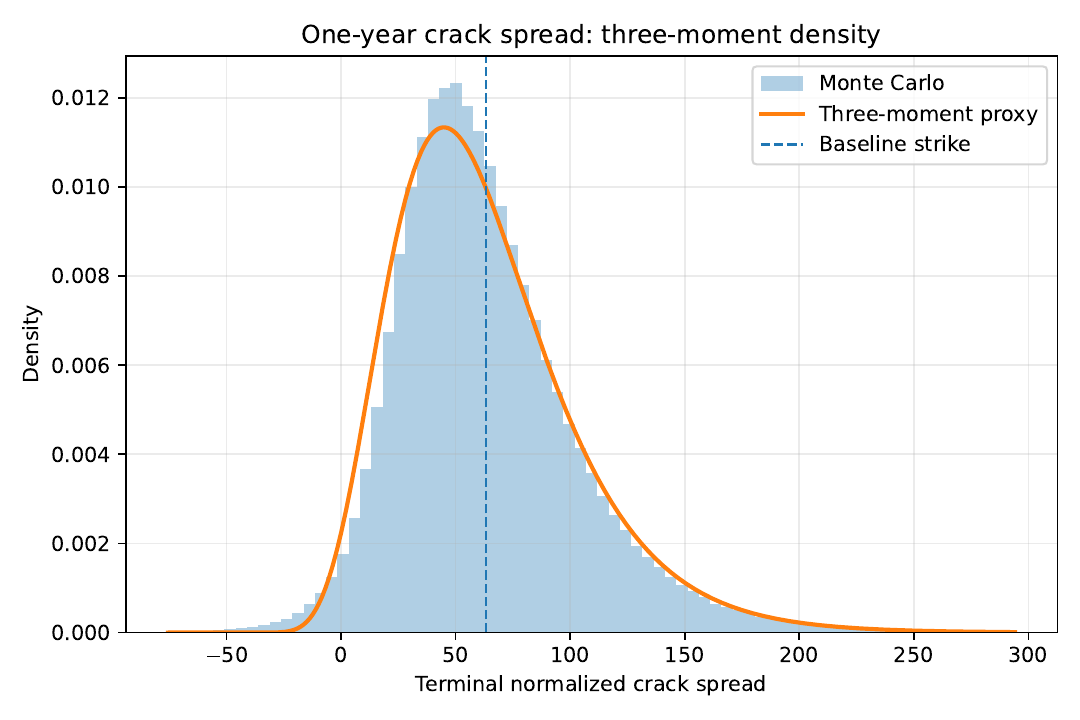}}
\hfill
\subfloat[Four-moment density fit]{%
\includegraphics[width=0.48\textwidth]{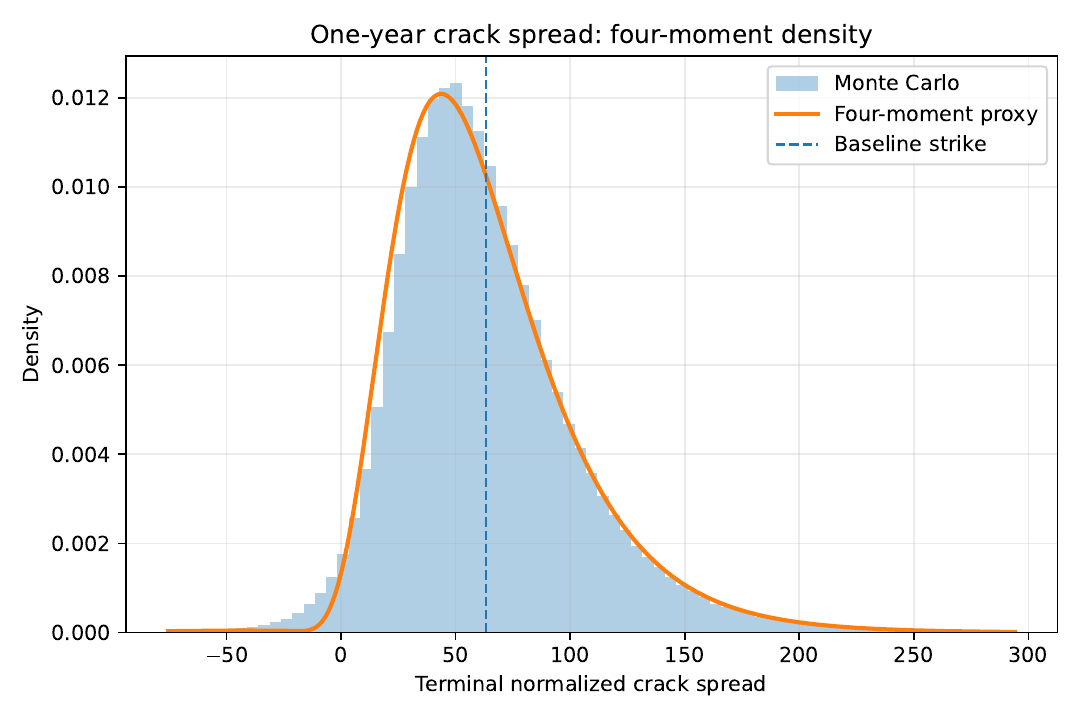}}

\medskip
\subfloat[Empirical and fitted CDFs]{%
\includegraphics[width=0.48\textwidth]{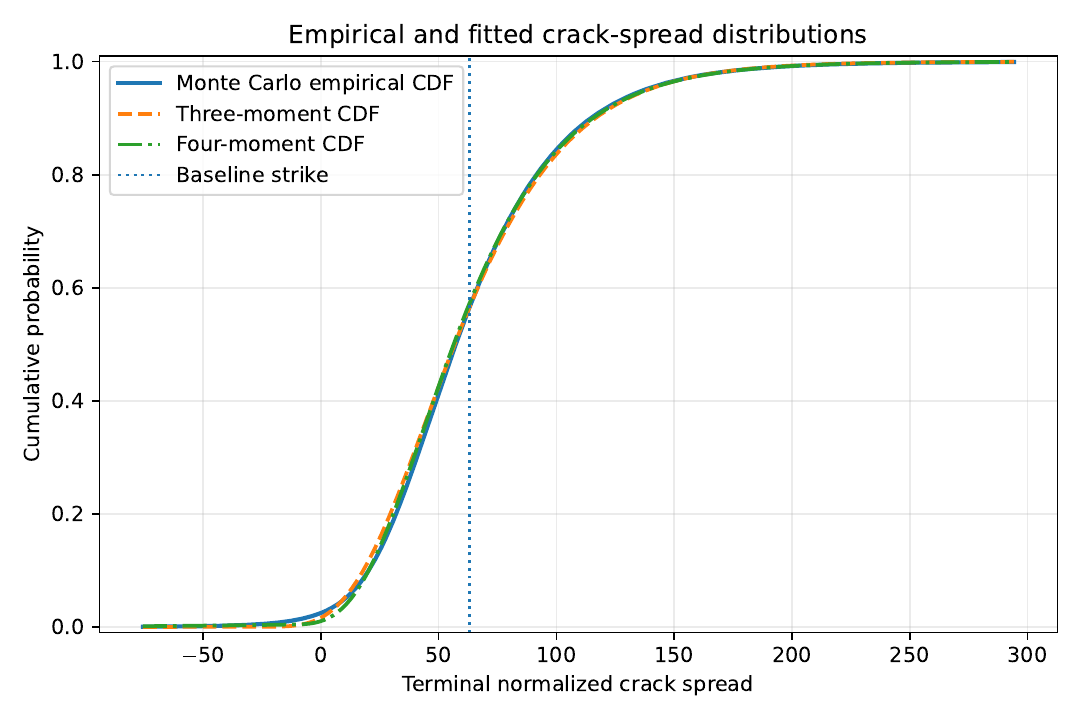}}
\hfill
\subfloat[Upper-tail CDF discrepancies]{%
\includegraphics[width=0.48\textwidth]{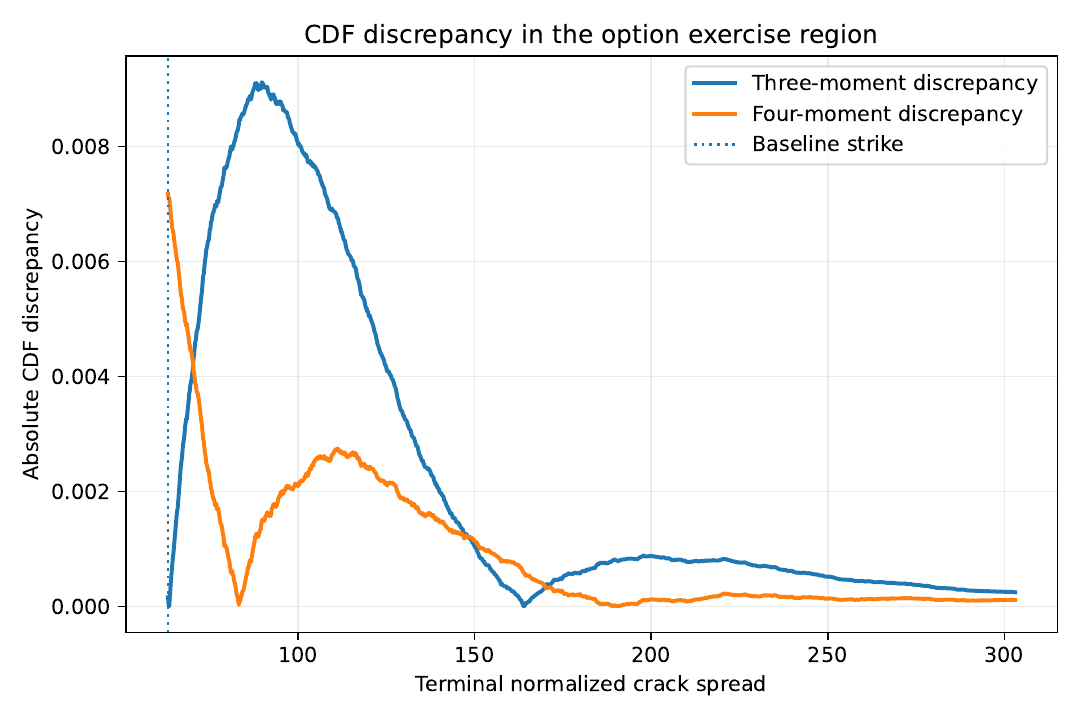}}
\caption{Distributional diagnostics for the one-year normalized
$3{:}2{:}1$ crack-spread basket. The vertical line marks the baseline strike
$K=B_0^{\mathrm{cr}}$. Panels (a) and (b) compare the Monte Carlo histogram
with the analytical densities implied by the three- and four-moment signed
proxies. Panel (c) compares the empirical CDF with both fitted CDFs over the
full displayed support. Panel (d) focuses on the option exercise region
$x\geq K$ and reports the absolute CDF discrepancy that enters the integrated
pricing-error bound in Proposition~\ref{prop:integrated-cdf-error}.}
\label{fig:crack-distributional-diagnostics}
\end{figure}

The two density approximations reproduce the main right-skewed shape of the
simulated crack-spread distribution. The three-moment proxy captures the
location and overall dispersion, but its fitted density is flatter around the
modal region. The four-moment density follows the height and shoulder of the
Monte Carlo histogram more closely. This difference is economically relevant
because the baseline strike lies to the right of the mode, where the call
value is determined by the remaining upper-tail probability and not by the
center of the distribution alone.

Panel (c) also shows why a global CDF graph, by itself, is not sufficient to
rank the two methods. The three curves are close over most of the support, so
their differences are difficult to distinguish at the scale of a full CDF\@.
Panel (d) magnifies the exercise-relevant region. The four-moment CDF is not
uniformly closer at every point: directly at the baseline strike, the
three-moment CDF happens to be locally closer to the empirical CDF\@. Moving
farther into the exercise region, however, the three-moment discrepancy rises
and remains larger over most of the economically relevant upper tail. The
four-moment discrepancy falls rapidly after the strike and stays comparatively
small through the main part of the tail.

This distinction is important. A call price does not depend only on the CDF
error at the strike. By Proposition~\ref{prop:integrated-cdf-error}, it depends
on the signed area of the CDF difference over the full exercise region
$x\geq K$. The distributional plots therefore explain the baseline result
in Table~\ref{tab:crack-baseline}: the fourth-moment condition does not produce
pointwise dominance, but it gives a better cumulative fit over most of the
exercise region and reduces the integrated tail error. This is consistent with
the decrease in absolute pricing error from $0.3752$ to $0.0583$.

\subsubsection{Accuracy across moneyness and maturity}

A single strike does not show how the approximation behaves across the
exercise region. For each maturity, we therefore use standardized moneyness
\begin{equation}
\label{eq:standardized-moneyness}
z_K
=
\frac{K-E[B_T^{\mathrm{cr}}]}
{\sqrt{\Var(B_T^{\mathrm{cr}})}},
\qquad
K(z)=E[B_T^{\mathrm{cr}}]+z\sqrt{\Var(B_T^{\mathrm{cr}})}.
\end{equation}
The grid is $z\in\{-2,-1.5,\ldots,2\}$. Negative strikes are omitted. The
maturities are 30, 90, and 180 trading days and one year. Each maturity uses
$5\times10^5$ common Monte Carlo paths across strikes.

Table~\ref{tab:crack-accuracy-summary} gives the main error measures. The
short-maturity results are deliberately interpreted with care. At 30 days,
the two methods have almost the same mean absolute error, and the
three-moment method has a slightly smaller RMSE\@. The value of the fourth
moment becomes clearer as maturity increases. At one year, the four-moment
method reduces MAE by 37.5\% and RMSE by 33.2\%. Across all 34 reported
strike--maturity cases, MAE falls from $0.0973$ to $0.0720$, while RMSE falls
from $0.1488$ to $0.1133$.

\begin{table}[htbp]
\centering
\caption{Pricing accuracy across moneyness and maturity}
\label{tab:crack-accuracy-summary}
\resizebox{\textwidth}{!}{%
\renewcommand{\arraystretch}{1.14}
\begin{tabular}{cccccccc}
\toprule
Maturity & Skewness & Kurtosis &
MAE (3M) & MAE (4M) & RMSE (3M) & RMSE (4M) &
Share with 4M better\\
\midrule
30 days  & 0.3548 & 3.3073 & 0.0094 & 0.0093 & 0.0113 & 0.0121 & 44.4\%\\
\rowrule
90 days  & 0.6233 & 3.9838 & 0.0488 & 0.0465 & 0.0587 & 0.0606 & 55.6\%\\
\rowrule
180 days & 0.9003 & 5.1793 & 0.1273 & 0.1052 & 0.1530 & 0.1434 & 62.5\%\\
\rowrule
1 year   & 1.0833 & 6.3262 & 0.2209 & 0.1381 & 0.2581 & 0.1723 & 75.0\%\\
\bottomrule
\end{tabular}}
\end{table}

The maturity pattern is consistent with the distributional role of the fourth
moment. Skewness increases from 0.3548 at 30 days to 1.0833 at one year, and
kurtosis rises from 3.3073 to 6.3262. The fitted four-moment probability also
moves gradually away from the boundary, from $p=0.999992$ at 30 days to
$p=0.996116$ at one year. At short maturities, a one-sided shifted lognormal
already represents the basket reasonably well. At longer maturities, the
basket becomes more asymmetric and heavy-tailed, and the additional moment
condition becomes more useful.

Figure~\ref{fig:crack-moneyness-errors} confirms that there is no pointwise
dominance. The three-moment method is more accurate at some strikes, especially
at short maturities. The four-moment method becomes more consistently accurate
near the center and in the upper-strike region as maturity grows. This is a
more informative conclusion than a claim that the fourth moment always
improves the price.

\begin{figure}[htbp]
\centering
\subfloat[30 trading days]{%
\includegraphics[width=0.48\textwidth]{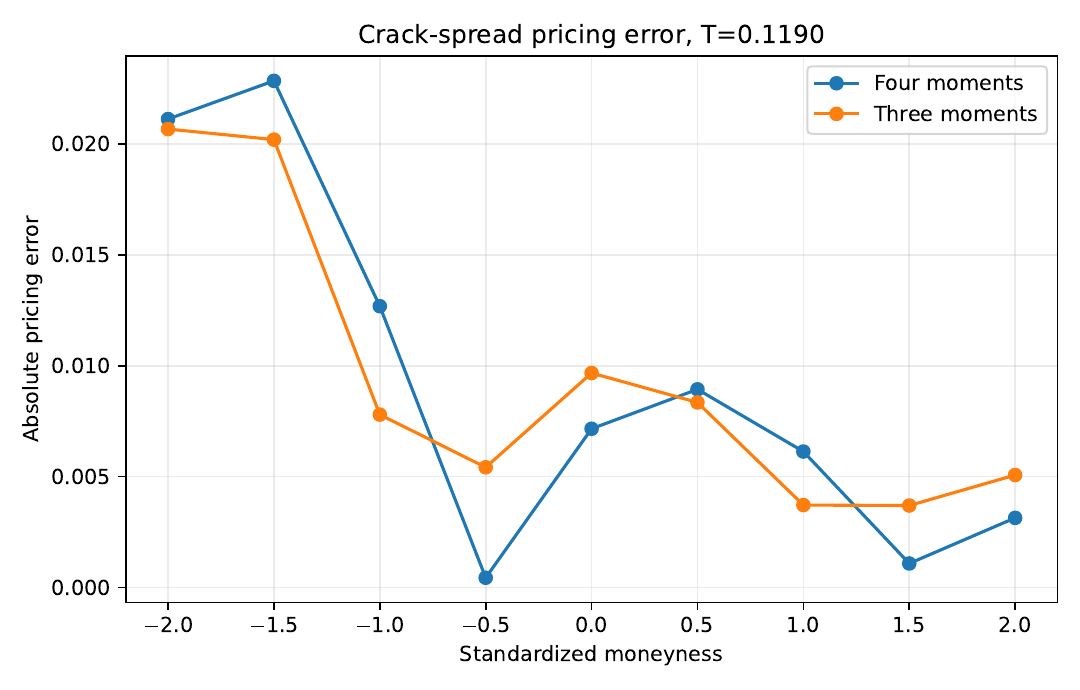}}
\hfill
\subfloat[90 trading days]{%
\includegraphics[width=0.48\textwidth]{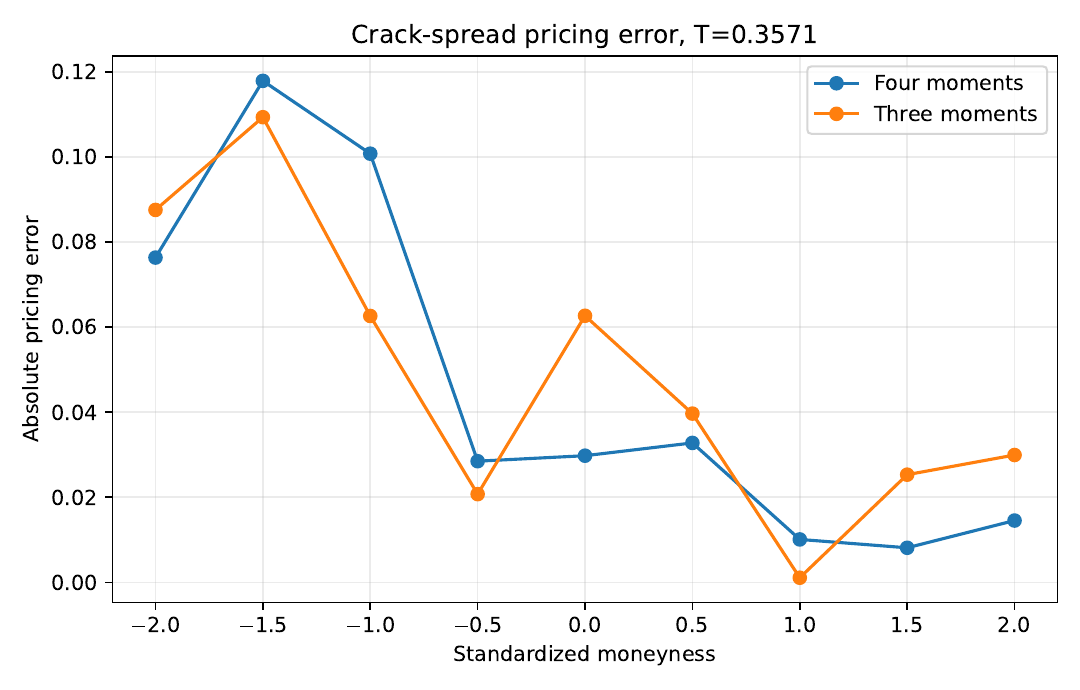}}

\medskip
\subfloat[180 trading days]{%
\includegraphics[width=0.48\textwidth]{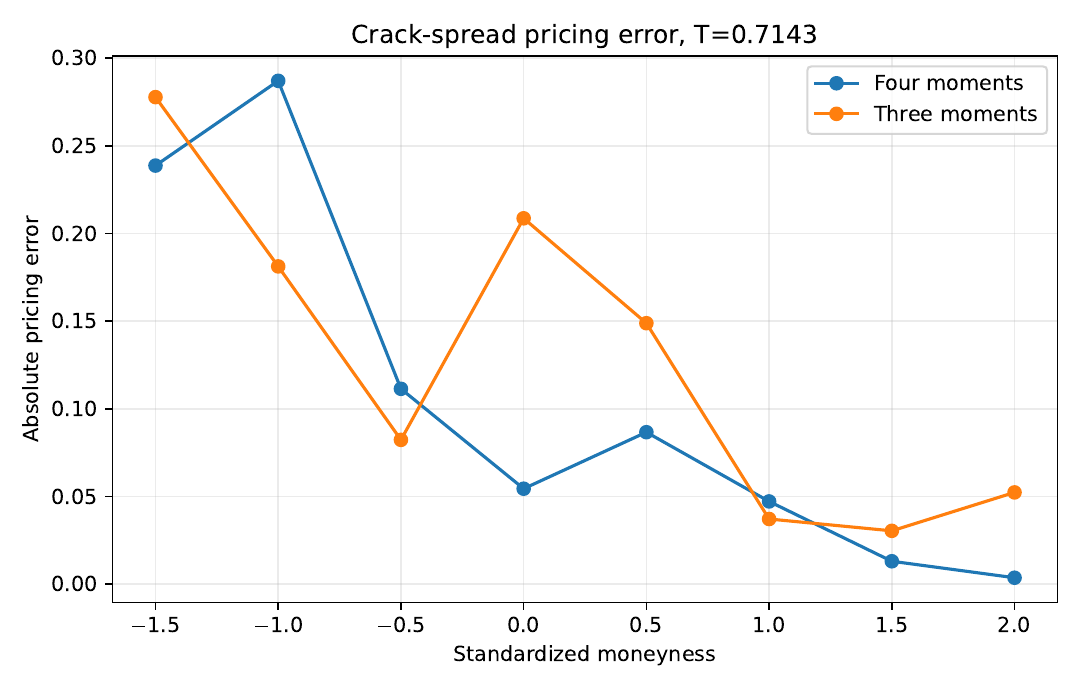}}
\hfill
\subfloat[One year]{%
\includegraphics[width=0.48\textwidth]{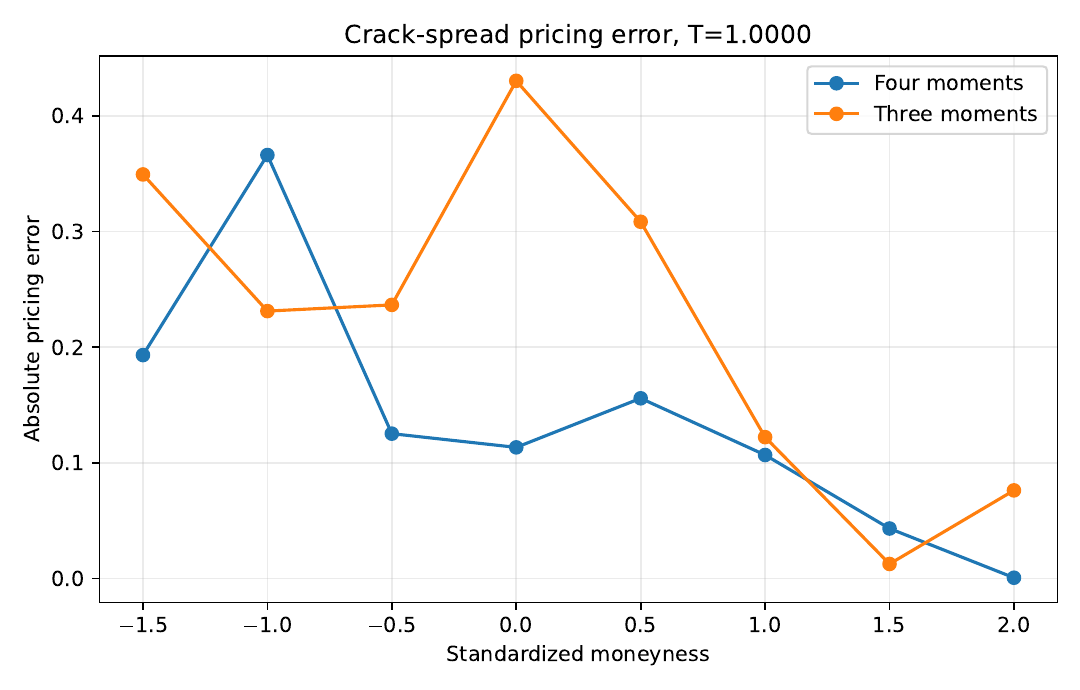}}
\caption{Absolute pricing errors relative to Monte Carlo across standardized
moneyness. The four-moment method does not dominate at every strike, but its
advantage becomes clearer as maturity, skewness, and kurtosis increase.}
\label{fig:crack-moneyness-errors}
\end{figure}

\subsubsection{Tail-error diagnostics and financial consistency}

Proposition~\ref{prop:integrated-cdf-error} implies that the absolute pricing
error is bounded by the discounted integrated CDF discrepancy above the
strike. Figure~\ref{fig:crack-tail-bound} plots the observed pricing errors
against this bound. Every point lies below the 45-degree line, as required by
\eqref{eq:tail-error-bound}. The bound is conservative, but it is informative:
the correlation between the absolute pricing error and the bound is 0.883 for
the four-moment method and 0.824 for the three-moment method.

The mean discounted tail bound is smaller for the four-moment approximation at
every maturity. Relative to the three-moment method, the reduction is 8.4\% at
30 days, 23.2\% at 90 days, 35.6\% at 180 days, and 45.5\% at one year. The
largest numerical discrepancy in the signed identity
\eqref{eq:integrated-cdf-error} is below $0.003$. These results support the
interpretation that the long-maturity price improvement comes from a better
fit in the exercise-relevant tail, not only from a smaller global moment
residual.

\begin{figure}[htbp]
\centering
\includegraphics[width=0.72\textwidth]{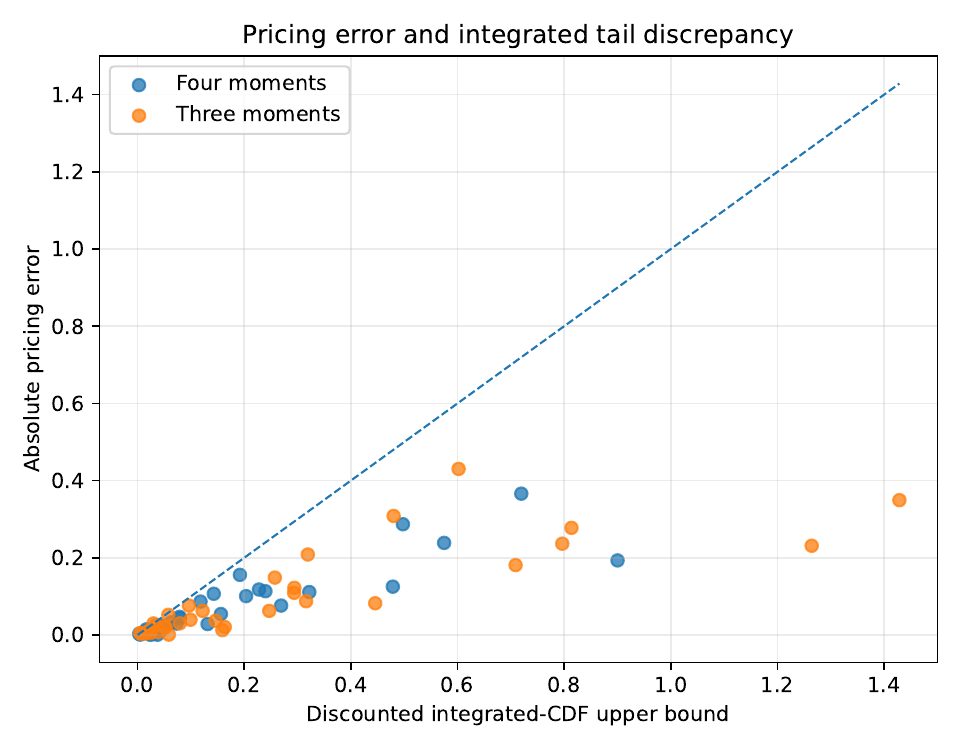}
\caption{Absolute pricing error and the discounted integrated-CDF upper bound.
The dashed line is the 45-degree line. All observations satisfy the theoretical
bound.}
\label{fig:crack-tail-bound}
\end{figure}

The strike grids also provide direct checks of the financial restrictions in
Proposition~\ref{prop:financial-properties}. For both analytical methods and
all four maturities, the numerical study finds no violations of decreasing
call prices, discrete strike convexity, or the forward intrinsic-value lower
bound. The result is consistent with pricing through an admissible proxy
distribution.

\subsubsection{Sensitivity to the historical estimation window}

Historical volatility and correlation estimates can depend on the chosen
sample. We repeat the one-year at-the-money experiment using trailing 1-, 3-,
5-, and 10-year windows. Each case uses $3\times10^5$ Monte Carlo paths.
Table~\ref{tab:crack-window-sensitivity} shows that the four-moment method is
closer to Monte Carlo under every window, although the size of the error changes
with the fitted parameters.

\begin{table}[htbp]
\centering
\caption{Sensitivity to the historical estimation window}
\label{tab:crack-window-sensitivity}
\resizebox{\textwidth}{!}{
\renewcommand{\arraystretch}{1.14}
\begin{tabular}{ccccc}
\toprule
Window & Valid returns & Monte Carlo (s.e.) &
Absolute error (3M) & Absolute error (4M)\\
\midrule
1 year  &  251 & 14.7830 (0.0521) & 0.3874 & 0.0265\\
\rowrule
3 years &  755 & 11.5865 (0.0382) & 0.1345 & 0.0525\\
\rowrule
5 years & 1256 & 13.3589 (0.0456) & 0.1309 & 0.0330\\
\rowrule
10 years& 2511 & 14.7059 (0.0495) & 0.4332 & 0.1093\\
\bottomrule
\end{tabular}
}
\end{table}

The window exercise should not be read as evidence that one historical window
is uniquely correct. It shows instead that the analytical comparison is
stable to materially different parameter estimates: the four-moment method
remains more accurate in each case. At the same time, the variation in the
Monte Carlo benchmark confirms that historical calibration uncertainty is
economically important and is separate from approximation error.

\section{Conclusion}
\label{sec:conclusion}

This work develops a probability-based four-moment framework for basket and
spread option pricing under correlated lognormal dynamics. The construction
first rewrites the exact option value as a linear combination of probabilities.
For a standard basket, these probabilities become CDF values of positive
correlated lognormal sums and are approximated by a shifted lognormal variance
mixture. For a general mixed-sign basket, a signed shifted lognormal proxy
matches the first four moments and yields an analytical call-price formula.
The theoretical analysis states the conditions under which the fitted
parameters define valid distributions and gives a practical procedure for
selecting among numerical roots. For the direct general-basket proxy, the
resulting call price is nonnegative, decreasing, and convex in the strike,
satisfies a Lipschitz bound and put--call parity, and respects the forward
lower bound. The pricing-error identities also show why moment matching alone
is not sufficient. A general-basket call depends on the integrated CDF
discrepancy above the strike, whereas a standard-basket price depends on the
weighted CDF errors of several positive sums at a common threshold.

The standard-basket experiments show that the probability reformulation is an
important part of the method. Applying the three-moment approximation after the
reformulation lowers the mean absolute error from $0.0437$ to $0.0059$, and
matching the fourth moment lowers it further to $0.0012$. For Basket~1, the
four distributional comparisons show that each positive-sum proxy tracks its
benchmark closely around the common threshold $x=1$. These plots support the
probability-level construction, while the exact CDF-error decomposition
explains how the four local approximation errors enter the final option price.
The empirical crack-spread application provides a broader test for a genuine
mixed-sign basket. At the one-year baseline, the four-moment approximation
reduces the absolute pricing error by approximately $84.5\%$ relative to the
three-moment method. Across the complete moneyness and maturity grid, it also
has lower overall MAE and RMSE\@. The improvement is not uniform across strikes.
At short maturities, the two methods are close and the three-moment
approximation is better at some points. The advantage of the fourth moment
becomes clearer as maturity, skewness, and kurtosis increase.
The one-year distributional diagnostics explain this pattern. Although the
three-moment CDF is locally close to the empirical CDF at the baseline strike,
its discrepancy increases farther into the exercise region. The four-moment
CDF remains closer over most of the upper tail, which reduces the integrated
pricing error. The maturity-wide diagnostics support the same interpretation:
the four-moment approximation has a smaller mean tail bound at every maturity,
and every observed price error satisfies the theoretical upper bound. No
violations of strike monotonicity, discrete convexity, or the forward lower
bound are observed.

The empirical exercise uses historical volatilities and correlations estimated
from continuous futures series. It therefore evaluates the analytical
approximation under a fitted lognormal model rather than its ability to match
market option prices. A natural next step is to use contract-matched futures
and option data, estimate risk-neutral inputs from observed option surfaces,
and study hedging errors as well as prices. A second extension is to evaluate
the tail-sensitive mixture parameter out of sample while preserving the
analytical structure of the pricing method.

\FloatBarrier
\bibliographystyle{siam}
\bibliography{reference}

\end{document}